\definecolor{lime}{HTML}{A6CE39}
\DeclareRobustCommand{\orcidicon}{%
	\begin{tikzpicture}
	\draw[lime, fill=lime] (0,0) 
	circle [radius=0.16] 
	node[white] {{\fontfamily{qag}\selectfont \tiny ID}};
	\draw[white, fill=white] (-0.0625,0.095) 
	circle [radius=0.007];
	\end{tikzpicture}
	\hspace{-2mm}
}
\xdef\csname orcid\x\endcsname{\noexpand\href{https://orcid.org/\csname orcidauthor\x\endcsname}{\noexpand\orcidicon}}
\theoremstyle{thmstyleone}%
\newtheorem{theorem}{Theorem}
\newtheorem{lemma}{Lemma}
\newtheorem{assumption}{Assumption}
\theoremstyle{thmstyletwo}%
\theoremstyle{thmstylethree}%
\begin{document}

\title[Article Title]{FedDyMem: Efficient Federated Learning with Dynamic Memory and Memory-Reduce for Unsupervised Image Anomaly Detection}

\author[1]{\fnm{Silin} \sur{Chen}\orcidA{}}\email{silin.chen@smail.nju.edu.cn}
\author[1]{\fnm{Andy} \sur{Liu}\orcidB{}}\email{andyliu@smail.nju.edu.cn}
\author[1]{\fnm{Kangjian} \sur{Di}\orcidC{}}\email{kangjiandi@smail.nju.edu.cn}
\author[2,3]{\fnm{Yichu} \sur{Xu}}\email{xuyc@lamda.nju.edu.cn}
\author[2,3]{\fnm{Han-Jia} \sur{Ye}\orcidD{}}\email{yehj@lamda.nju.edu.cn}
\author[4]{\fnm{Wenhan} \sur{Luo}\orcidE{}}\email{whluo.china@gmail.com}
\author*[1,5]{\fnm{Ningmu} \sur{Zou}\orcidF{}}\email{nzou@nju.edu.cn}
\affil[1]{\orgdiv{School of Integrated Circuits}, \orgname{Nanjing University}, \orgaddress{\city{Suzhou}, \postcode{215163}, \state{Jiangsu}, \country{China}}}

\affil[2]{\orgdiv{National Key Laboratory for Novel Software Technology}, \orgname{Nanjing University}, \orgaddress{\city{Nanjing}, \postcode{210023}, \state{Jiangsu}, \country{China}}}

\affil[3]{\orgdiv{School of Artificial Intelligence}, \orgname{Nanjing University}, \orgaddress{\city{Nanjing}, \postcode{210023}, \state{Jiangsu}, \country{China}}}

\affil[4]{\orgname{Hong Kong University of Science and Technology}, \orgaddress{\state{Hong Kong}}}
\affil[5]{\orgdiv{Interdisciplinary Research Center for Future Intelligent Chips (Chip-X)}, \orgname{Nanjing University}, \orgaddress{\city{Suzhou}, \postcode{215163}, \state{Jiangsu}, \country{China}}}


\abstract{Unsupervised image anomaly detection (UAD) has become a critical process in industrial and medical applications, but it faces growing challenges due to increasing concerns over data privacy. The limited class diversity inherent to one-class classification tasks, combined with distribution biases caused by variations in products across and within clients, poses significant challenges for preserving data privacy with federated UAD. Thus, this article proposes an efficient federated learning method with dynamic memory and memory-reduce for unsupervised image anomaly detection, called FedDyMem. Considering all client data belongs to a single class (i.e., normal sample) in UAD and the distribution of intra-class features demonstrates significant skewness, FedDyMem facilitates knowledge sharing between the client and server through the client's dynamic memory bank instead of model parameters. In the local clients, a memory generator and a metric loss are employed to improve the consistency of the feature distribution for normal samples, leveraging the local model to update the memory bank dynamically. For efficient communication and data privacy, a memory-reduce method based on weighted averages is proposed to significantly decrease the scale of memory banks. This reduced representation inherently, thereby mitigating the risk of data reconstruction. On the server, global memory is constructed and distributed to individual clients through k-means aggregation. Experiments conducted on six industrial and medical datasets, comprising a mixture of six products or health screening types derived from eleven public datasets, demonstrate the effectiveness of FedDyMem. }

\keywords{Unsupervised federated learning, Unsupervised image anomaly detection, Feature distribution shift, Communication efficiency, Privacy-preserving}



\maketitle

\section{Introduction}\label{sec1}
Unsupervised image anomaly detection has achieved significant success in various domains, such as industrial inspection (\cite{xie2024iad}) and medical disease recognition (\cite{cai2024medianomaly}). However, these methods highly depend on the availability of large-scale datasets for centralized training. Industrial companies and healthcare organizations often have practical limitations in collecting and aggregating raw data, which significantly challenges centralized learning approaches (\cite{yang2024clustering, wu2024facmic}). Recently, federated learning has been developed as a privacy-preserving paradigm for machine learning, providing collaborative model training across distributed devices while maintaining raw data locally and not transmitting to central servers (\cite{li2021fedrs, saha2024multifaceted}). Thus, the integration of local training and global aggregation strategies has garnered significant attention in industrial and healthcare domains, enabling data privacy protection (\cite{zhao2024medical,  zhang2024vertical}).

Federated learning is applicable to both supervised and unsupervised learning scenarios. Recent developments in unsupervised and semi-supervised federated learning have shown significant progress (\cite{jin2023federated, yang2025relation}). These methods are often focused on learning a general representation or prototype by self-supervised learning (SSL) while keeping private data decentralized and unlabeled. Some methods continue to rely on local updates and aggregation of model parameters to train a global model for representation generation (\cite{fedU, liao2024rethinking}). Alternatively, other approaches take advantage of representation sharing, using knowledge distillation (KD) to construct a robust representation space (\cite{fedX, lubana2022orchestra, zhang2024prototype}). However, these methods are designed to obtain a more efficient representation by SSL and require local fine-tuning through a supervised linear evaluation protocol (\cite{fedU}) following federated learning. Consequently, existing approaches in unsupervised federated learning are inadequate for addressing the UAD task.

\begin{figure}[t]
\centering
\includegraphics[width=0.5\textwidth]{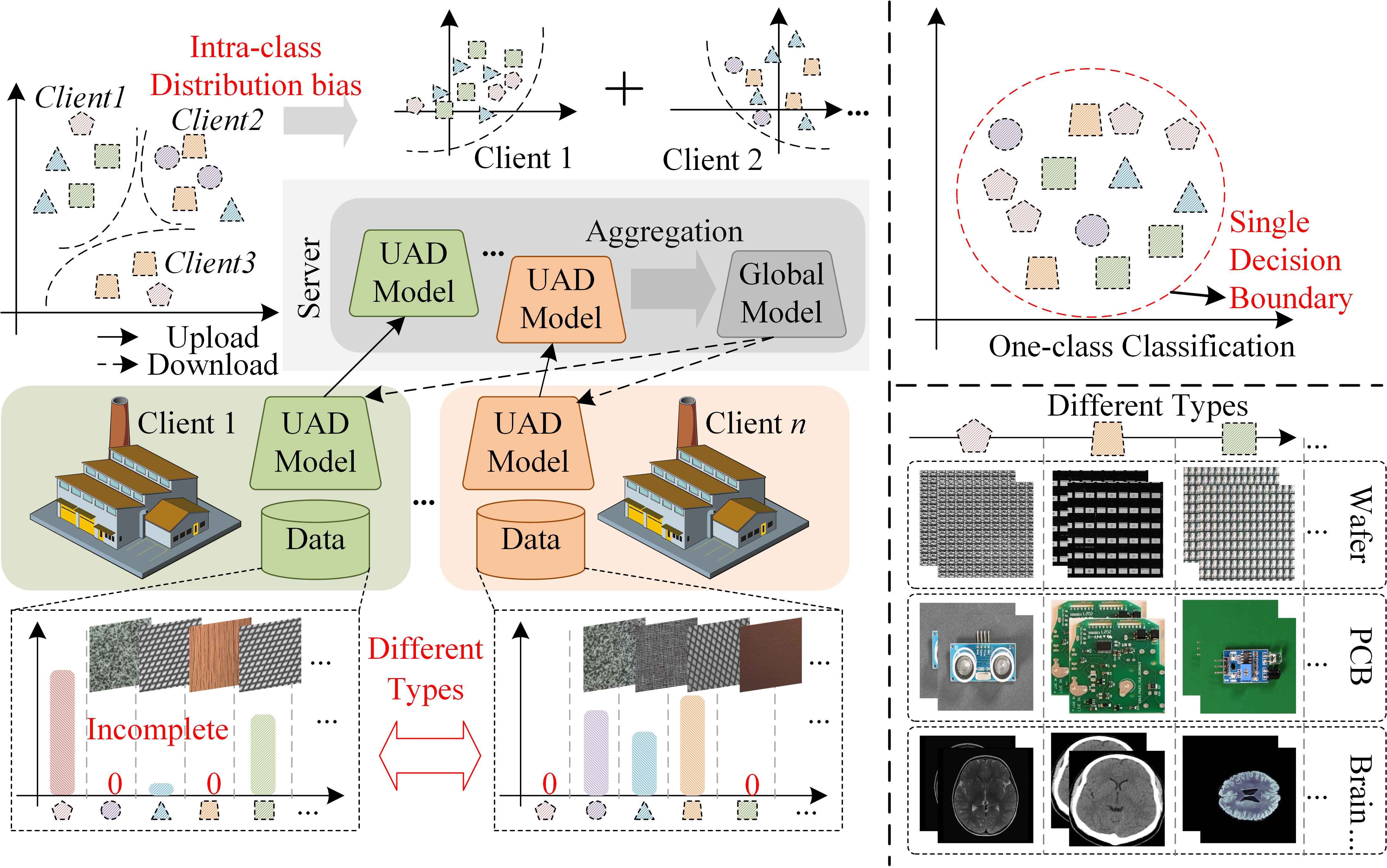} 
\caption{ Illustration of federated learning in UAD. As shown on the left, for the same product anomaly detection, an intra-class distribution bias exists between local models, caused by each client possessing varying and incomplete subsets of product types. As shown in the top right, federated UAD aims to establish a single decision boundary for one-class classification within the global model. Examples of type differences in anomaly detection for different products are shown in the right down.}
\label{fig_1}
\end{figure}

With the advancements in UAD using deep learning, reconstruction-based methods (\cite{liang2023omni, dai2024generating} have emerged as prominent strategies by learning to accurately reconstruct normal samples while minimizing reconstruction loss, thereby enabling the identification of anomalies as deviations from expected reconstructions. Intuitively, integrating reconstruction loss functions to locally optimize the model, followed by aggregating the model weights through federated learning approaches, presents a promising avenue for enabling federated learning in UAD. However, as shown in Fig.~\ref{fig_1}, the intra-class distribution bias within the normal sample class, caused by variations in products, is also an important issue in UAD (\cite{lei2025adapted}). Similarly, data heterogeneity in federated learning can lead to significant performance degradation and convergence challenges, primarily due to inconsistencies in local updates and optimization dynamics (\cite{karimireddy2020scaffold, zhao2018federated}). Recently, most of the current federated learning methods focus on addressing the challenges posed by data heterogeneity (\cite{fedprox, li2021fedbn, li2022federated, zhou2024federated, dai2023tackling}). To address the feature shift challenge in federated learning, some methods have proposed incorporating local constraints by the global model or gradient information, thereby mitigating discrepancies across clients (\cite{fedprox, li2021fedbn, karimireddy2020scaffold, luo2022adapt}). Despite these efforts, the significant intra-class feature distribution bias in UAD results in insufficient global information to effectively capture the representations and knowledge of diverse products (\cite{huang2023rethinking}). Meanwhile, other researchers have addressed this issue by synthesizing data (\cite{nguyen2021federated, li2024feature}). However, sharing additional data among clients inherently increases the risk of data privacy leakage (\cite{li2024fedcir}). Recent approaches have increasingly emphasized prototype-learning methods to tackle the challenges posed by inconsistent data distributions. These methods construct category-specific global and local prototypes by leveraging feature alignment, effectively addressing generalization gaps across local clients (\cite{tan2022fedproto, feddbe, dai2023tackling, zhou2024federated}). Nevertheless, reconstruction-based loss functions in UAD often lead to overfitting on local features of normal samples (\cite{you2022unified}). The use of a single category hinders effective global feature aggregation and prevents normal representations from achieving a compact distribution. As a result, the aggregated prototypes deviate from the desired global distribution of normal samples.

Furthermore, communication between clients in federated learning is limited by network bandwidth, unreliable connections, and varying device statuses. Consequently, designing communication-efficient algorithms is crucial to overcome these challenges (\cite{almanifi2023communication}). Existing methods primarily tackle this issue by either decreasing the total number of communication rounds required for model convergence or reducing the data uploaded during each communication round (\cite{li2020federated}). To accelerate global convergence, some approaches leverage mechanisms such as localized model updates (\cite{li2020federated2}) and gradient optimization enhancements (\cite{yang2022federated}), effectively reducing the number of communication rounds. To reduce the data uploaded, methods such as model compression (\cite{shlezinger2020uveqfed, li2023online}) and selective client participation (\cite{xu2020client}) have been extensively explored to minimize communication overhead. Beyond efficiency, privacy-ensured communication is equally essential, especially in resource-constrained and privacy-sensitive environments. The exchanged information should resist reconstruction attempts from adversaries, preventing leakage of the raw client data (\cite{cao2025hybrid}). In this article, we aim to minimize simultaneously communication overhead and the mutual information with the original data through effective data reduction techniques, thereby improving the communication efficiency and enhancing privacy protection.

To summarize, federated UAD must address the following key challenges: 1) enabling each client to perform end-to-end unsupervised anomaly detection without the reliance on additional labeled data or supervised fine-tuning, utilizing only one-class (normal) samples as the basis for training. 2) mitigating significant distribution biases in normal sample data across clients, which arise due to the heterogeneous product types or health screening modalities within the federated learning framework. 3) ensuring communication between clients and the central server is both privacy-preserving and communication-efficient. This article introduces FedDyMem, an efficient federated learning framework with dynamic memory and memory-reduce, specifically designed to tackle the aforementioned challenges in federated UAD. FedDyMem facilitates knowledge transfer in federated learning by sharing the memory bank instead of model parameters among clients. Specifically, it constructs a local memory bank using the limited dataset of normal samples available at each client. To address the challenge of intra-class distribution bias in different clients, a memory generator and a metric loss function are introduced to improve the consistency of normal feature distributions. Then the local memory bank is dynamically updated by the local client model before uploading. To mitigate distribution bias during aggregation, FedDyMem employs k-means clustering to extract general distribution from different clients, thereby ensuring effective aggregation without confusion. Considering communication efficiency and data privacy, a memory-reduce method is proposed to decrease the scale of memory banks, thereby reducing its mutual information with the original data during dynamic updates. To construct diverse data for our experiments, we categorize the 11 public datasets into six distinct types based on their inherent characteristics. We then evaluate our proposed method separately on six product or health screening types, achieving state-of-the-art performance in each case. The major contributions of this article are summarized as follows:
\begin{itemize} 
\item To the best of our knowledge, FedDyMem is the first federated learning framework designed specifically for unsupervised image anomaly detection. This framework facilitates collaborative training using only normal samples available on the client side.  
\item To address the significant distribution biases of normal samples across clients, we incorporate a memory generator and a metric loss function that improve the consistency of feature distributions. Additionally, we employ k-means clustering during the aggregation phase to reduce inter-client discrepancies and mitigate potential ambiguities.
\item To achieve both communication efficiency and privacy-preserving knowledge sharing, we propose a memory-reduce method based on a dynamic weighted average. This method substantially decreases the size of the memory bank, thereby reducing both communication overhead and the mutual information between uploaded information and raw data, while preserving model performance.
\item We have collected eleven image anomaly detection datasets from various industrial and medical domains. These datasets are further divided into six datasets depending on the type of product or health screening. Comprehensive experiments demonstrate that FedDyMem achieves excellent federated UAD performance.
\end{itemize}

\section{Related Work}
\subsection{Unsupervised Image Anomaly Detection}
The objective of UAD is to identify whether a given sample is anomalous and to precisely localize the anomaly regions, using a training dataset that comprises only normal samples (\cite{xie2024iad}). The reconstruction-based methods are significant in UAD, which assumes that anomalous samples cannot be accurately reconstructed by feature learning models trained exclusively on normal samples (\cite{bergmann2018improving, zavrtanik2021draem, jiang2022masked, liang2023omni, dai2024generating}). In contrast to traditional UAD approaches, which rely on a single, centralized training dataset, federated UAD requires the efficient aggregation of information from samples distributed across multiple clients. Reconstruction-based methods, however, typically rely solely on normal samples within the training set and fail to capture key characteristics of out-of-distribution samples. Another widely utilized approach for anomaly detection is the memory bank-based method. SPADE (\cite{cohen2020sub}) proposed a semantic pyramid structure to construct a pixel-level feature memory, effectively facilitating anomaly detection. PaDiM (\cite{defard2021padim}) employed Gaussian distributions derived from normal samples as the memory bank and utilized the Mahalanobis distance as the anomaly metric. PatchCore (\cite{roth2022towards}) introduced local neighborhood aggregation to expand the receptive field of the memory bank while preserving resolution. Additionally, PatchCore implemented a greedy core-set subsampling strategy to reduce the memory bank size without significant performance degradation. PNI (\cite{bae2023pni}) further advanced the paradigm by integrating spatial and neighborhood information into the memory bank construction, thereby improving its capacity to represent normal samples comprehensively. These methods extract features from normal images and store them within a feature memory bank. During testing, the sample queries the memory bank to retrieve feature corresponding to the k-nearest neighbors. However, the memory banks are non-trainable, limiting their capacity to learn information across local clients in the federated learning.
\subsection{Federated Learning}

The federated learning can be divided into supervised-based and unsupervised-based methods. In unsupervised federated learning, most existing methods focused on representation learning or prototype-based learning, which aimed to learn good feature representations from unlabeled data to facilitate downstream
machine learning tasks (\cite{fedU, liao2024rethinking, fedX, lubana2022orchestra, zhang2024prototype}). However, these methods depend on aligning local representations with global representations, a task that proves challenging for UADs constrained to a single category. ProtoFL (\cite{kim2023protofl}) was proposed to integrate federated learning with prototyping for addressing one-class classification. However, its two-stage fine-tuning method, which utilizes normalizing flow for representation learning, introduces unnecessary computational overhead and ignores distribution bias among clients in UAD. Furthermore, most unsupervised federated learning approaches require supervised fine-tuning, which limits their applicability to UAD. 

FedAvg (\cite{fedavg}) is a foundational approach in federated learning that enables the training of a global model by aggregating parameters from locally trained models. Despite its effectiveness, the performance of FedAvg degrades significantly with high data heterogeneity. FedProx (\cite{fedprox}) and SCAFFOLD (\cite{karimireddy2020scaffold}) demonstrated enhanced performance through a global penalty term, effectively addressing and mitigating discrepancies. Other methods have now been developed to address the data heterogeneity with the personalized model. FedBN (\cite{li2021fedbn}) mitigates data heterogeneity challenges in federated learning by maintaining BN parameters that are specific to each client's local model. APPLE (\cite{luo2022adapt}) aggregates client models locally by learning precise weight updates instead of relying on approximations. Recent approaches address generalization gaps across local clients by leveraging feature alignment to construct category-specific global and local prototypes (\cite{tan2022fedproto, dai2023tackling, zhou2024federated, feddbe}). Nevertheless, the aforementioned methods face significant challenges in UAD, including high communication overhead, insufficient representation diversity and pronounced intra-class distribution bias.

\begin{figure*}[ht]
\centering
\includegraphics[width=0.95\textwidth]{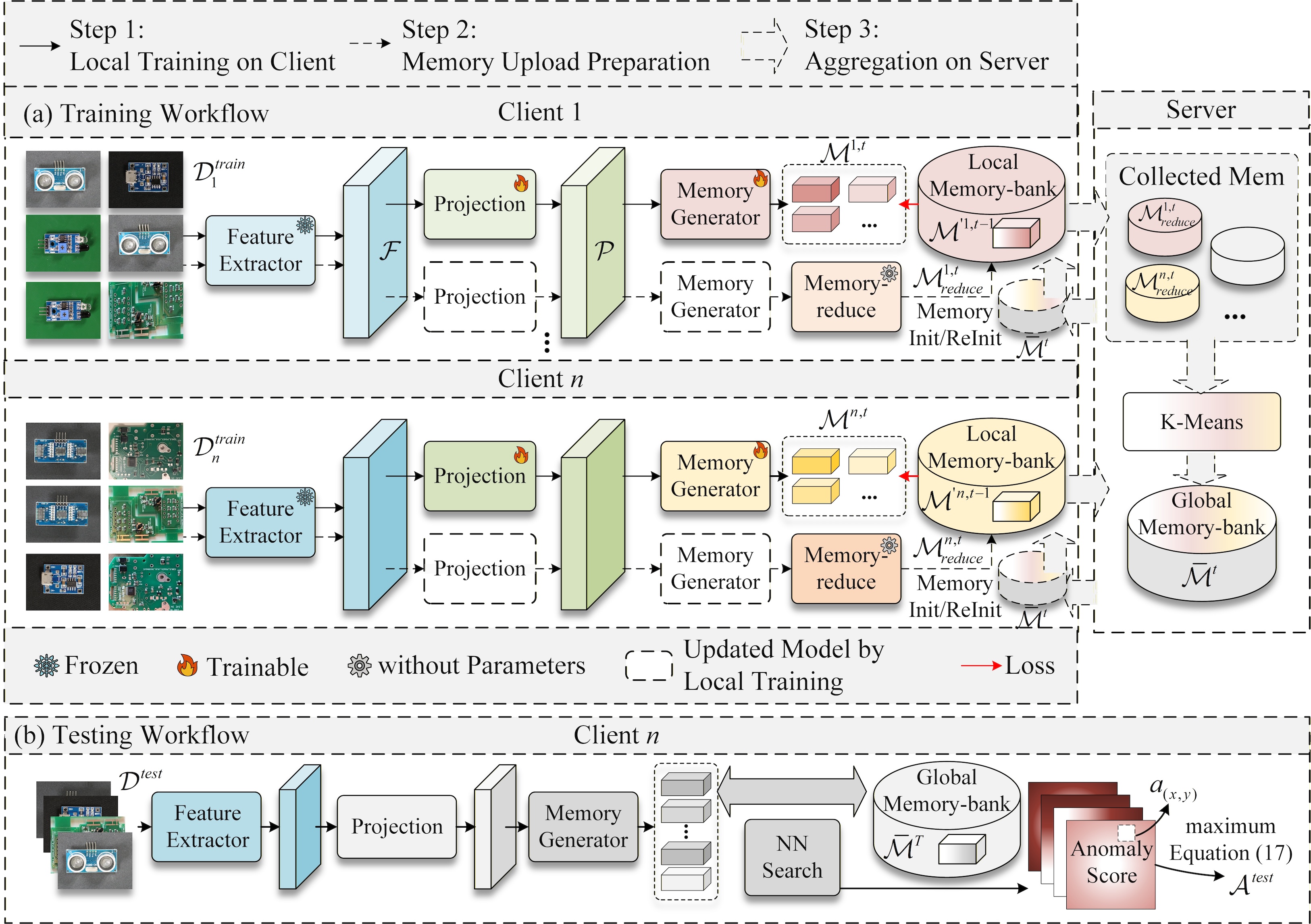}
\caption{Overview of our proposed FedDyMem framework. 
}
\label{fig_2}
\end{figure*}

\section{Preliminary}
\subsection{Typical Federated Learning}
In the typical federated learning, such as FedAvg (\cite{fedavg}), there exist $N$ clients $\{{\cal C}_n\}_{n=1}^N$ with their individual datasets ${\{{\cal D}_n= ({\cal X}_{n}, {\cal Y}_{n}) \sim {\mathbb{P}}_n({\cal X}, {\cal Y})\}}_{n=1}^{N}$ and local models $\{f^{n}(\cdot;{\bf W}^{n})\colon {{\cal X}_{n} \to {{\cal Y}_n}}\}_{n=1}^{N}$, where $n$ is the index for the client, $({\cal X}_{n}, {\cal Y}_n)$ denotes a set of samples and labels on the $n$-th client, ${{\mathbb{P}_{n}}}({\cal X}, {\cal Y})$ represent the joint distribution of samples and labels. The objective of the typical federated learning is to facilitate collaborative training of a global model ${\bf W}^G$ that generalizes effectively across the entire dataset $\cal D$ while ensuring data privacy. The global training optimization target is:
\begin{equation}
\label{tyfed}
\underset {{\bf W}^G}{\min {\cal L}({\bf W}^G)} \buildrel \Delta \over = \sum\limits_{n = 1}^N {z_n{{\cal L}_n}({\bf W}^n)},
\end{equation}
where the $z_n$ denotes the weight coefficient which is computed as $\frac{{\left| {{\cal{D}}_n} \right|}}{{\left| \cal{D} \right|}}$ in FedAvg, where the $\left| \cal{D} \right| = {\bigcup\nolimits_{n = 1}^N {{{\cal D}_n}}}$. The empirical loss  ${{\cal L}_n}({\bf W}^n)$ of the $n$-th client can be formulated as:
\begin{equation}
    {{\cal{L}}_n}({\bf W}^n) = {{\mathbb E}_{({x \in {{\cal X}_n}},{y \in {{\cal Y}_n}}) \sim {\mathbb{P}}_n({\cal X}, {\cal Y})}}[{{\ell}({f^{n}({x};{\bf W}^n)}, {y})}],
\end{equation}
where $\ell(\cdot,\cdot)$ denotes the loss function on the clients. Following the local updates, the server $\cal{S}$ performs aggregation of multiple local models' parameters to derive the global model as ${\bf W}^G\leftarrow\sum\nolimits_{n = 1}^N {{z_n}{{\bf W}^n}}$.

\subsection{Federated UAD}
 As shown in Fig.~\ref{fig_1}, each client in UAD could be an industrial manufacturing enterprise, an organizational entity or a healthcare institution, etc. that faces the challenges of large raw abnormal data. Consequently, the training dataset available at the $n$-th client is denoted as:
 \begin{equation}
     {{\cal D}_n^{train}=({\cal X}_{n}, {\cal Y}_{n})\sim{\mathbb{P}}_n({\cal X})}, \forall{y \in {\cal Y}_{n}}, y=0,
 \end{equation}
  and the test dataset is denoted as 
  \begin{gather}
    {\cal D}^{test} = \bigcup\nolimits_{n = 1}^N {{{\cal D}_n^{test}}}, \textit{where} \\
{{\cal D}_n^{test} = ({\cal X}_{n}, {\cal Y}_{n})\sim{\mathbb{P}}_n({\cal X},{\cal Y})}, \exists{y \in {\cal Y}_{n}}, y=1, \notag
\end{gather}
where a label $y=0$ indicates a normal sample, while $y=1$ denotes an anomalous sample. In practical scenarios, as shown in Fig.~\ref{fig_1}, the distribution of client samples for the same product is different in training dataset:
 \begin{equation}
 \label{skew}
    \exists_{x_m \in {\cal D}_{m}^{train}, x_n \in {\cal D}_{n}^{train}}\mathbb{P}_m(x_m) \neq \mathbb{P}_n(x_n), \textit{while } m \neq n.
 \end{equation}
We refer to this variation as the intra-class distribution bias. Consistent with typical federated learning, our global optimization objective remains as defined in \ref{tyfed}. However, due to the absence of anomalous samples in the ${\cal D}^{train}$, the local model $f^{n}(\cdot;{\bf W}^{n})$ cannot effectively model ${\cal X}_{n} \to {{\cal Y}_n}$. Thus, the empirical loss of the $n$-th client in UAD is formulated as:
\begin{equation}
    {{\cal{L}}_n}({\bf W}^n) = {{\mathbb E}_{({x \in {{\cal X}_n}}) \sim {\mathbb{P}}_n({\cal X})}}[{{\ell}({f^{n}({x};{\bf W}^n)})}],
\end{equation}
where $\ell(\cdot)$ denotes the local loss function for individual clients. It requires the $n$-th local model $f^{n}(\cdot; {\bf W}^n)$ to approximate the normal sample distribution, i.e., ${\mathbb{P}_{n}({\cal X}^{train})}$ , under the condition of intra-class bias (refer to \ref{skew}). For the federated UAD, shown in Fig.~\ref{fig_1}, the global model's objective in one-class classification task is to generalize across all clients by constructing ${\mathbb{P}({\{{\cal X}_{n}^{train}}\}_{n=1}^{N})}$. During the testing phase, the result of a sample $x^{test} \in {{\cal D}^{test}}$ can be expressed as follows:
\begin{equation}
\mathds{1}({x^{test}}) =  \begin{cases}
0,&{\text{if}}\ {{f({x^{test}};{\bf W}^{G})}\sim {\mathbb{P}({\{{\cal X}_{n}^{train}}\}_{n=1}^{N})}} \\ 
{1,}&{\text{otherwise.}} 
\end{cases}
\end{equation}

\section{Methodology}
\subsection{Overview}

In this article, we propose an efficient federated learning with dynamic memory and memory-reduce for UAD, called FedDyMem. FedDyMem aims to train model parameters on datasets with locally inconsistent feature distributions and generate a global memory bank that produces uniform distributions, thereby achieving high performance on global ${\cal D}^{test}$. Each communication round in FedDyMem comprises the following three steps of training: {1) Local training:} On the $n$-th client, as shown in Fig.~\ref{fig_2}, the feature extractor, projection layer, and memory generator collaboratively generate high-quality memory representations $\{{\cal M}^{n, i, t}\}_{i=0}^{\left|{\cal D}_n^{train} \right|}$ in the training workflow, where $i$ denotes the $i$-th sample in ${\cal D}_n^{train}$ and $t$ denotes the $t$-th rounds. During the initialization phase ($t=0$), all memory features $\{{\cal M}^{n,i,0}\}_{i=0}^{\left|{\cal D}_n^{train} \right|}$ extracted from training samples are processed through memory-reduce before being send directly to the server. The local memory bank ${\cal M'}^{n,0}$ synchronizes with the aggregated memory bank $\bar{\cal M}^{0} \stackrel{agg}{\longleftarrow} {\{{\cal M}^{n,0}_{reduce}\}_{n=0}^N}$ received from the server. After initialization, the extracted memory feature for sample $i$, denoted as ${\cal M}^{n, i, t}$, is utilized in subsequent rounds to compute the local loss $\ell$. This computation is performed in conjunction with the local memory bank ${\cal M'}^{n,t-1}$, which is obtained from the server aggregation during the previous round. {2) Memory preparation:} After the local training step in each communication round, both the projection layer and the memory generator are updated.  Subsequently, the local training samples are passed through the trained models to extract the corresponding memory features, denoted as ${\cal M}^{n, i, t}$. These per-sample memory features are then reduced by the memory-reduce, summarized as ${\{{\cal M}^{n, i, t}\}_{i=0}^{\left|{\cal D}_n^{train} \right|}} \to {\cal M}^{n,t}_{reduce}$. Representation sharing (\cite{kim2023protofl,tan2022fedproto,zhang2024prototype}) provides distinct advantages over parameter aggregation in addressing heterogeneity in unsupervised federated learning, as parameter aggregation methods suffer from privacy and intellectual property (IP) concerns (\cite{li2021survey, zhang2024fedtgp}) and introduce high communication overhead. Therefore, FedDyMem introduces the memory-reduce operation and the memory bank sharing mechanism to mitigate communication costs while preserving client data privacy. {3) Aggregation and distribution:} In the $t$-th round, the server receives $N$ local memory banks, denoted as $\{{\cal M}^{n,t}_{reduce}\}_{n=0}^{N}$. As shown in Fig.~\ref{fig_2}, FedDyMem employs $K$-means clustering algorithm, where $K$ corresponds to the capacity of the local memory, to aggregate the diverse local memory banks into a unified global memory bank, represented as $\bar{\cal M}^{t} \stackrel{agg}{\longleftarrow} {\{{\cal M}^{n,t}_{reduce}\}_{n=0}^N}$. The global memory bank $\bar{\cal M}^{t}$ is subsequently distributed to all clients to update their respective local memory banks. Consequently, at any round $t$, for all $n \in N$, the local memory banks are synchronized such that ${\cal M'}^{n,t} = \bar{\cal M}^{t}$. 

During the testing phase, the extracted memory features ${\cal M}^{i}$ are compared against the memory bank $\bar{\cal M}$ using the nearest neighbor search, and the anomaly score ${\cal A}^{i}$ is computed based on the resulting metric, following the anomaly score function in (\cite{roth2022towards, lee2022cfa}).

\begin{figure}[t]
\centering
\includegraphics[width=0.4\textwidth]{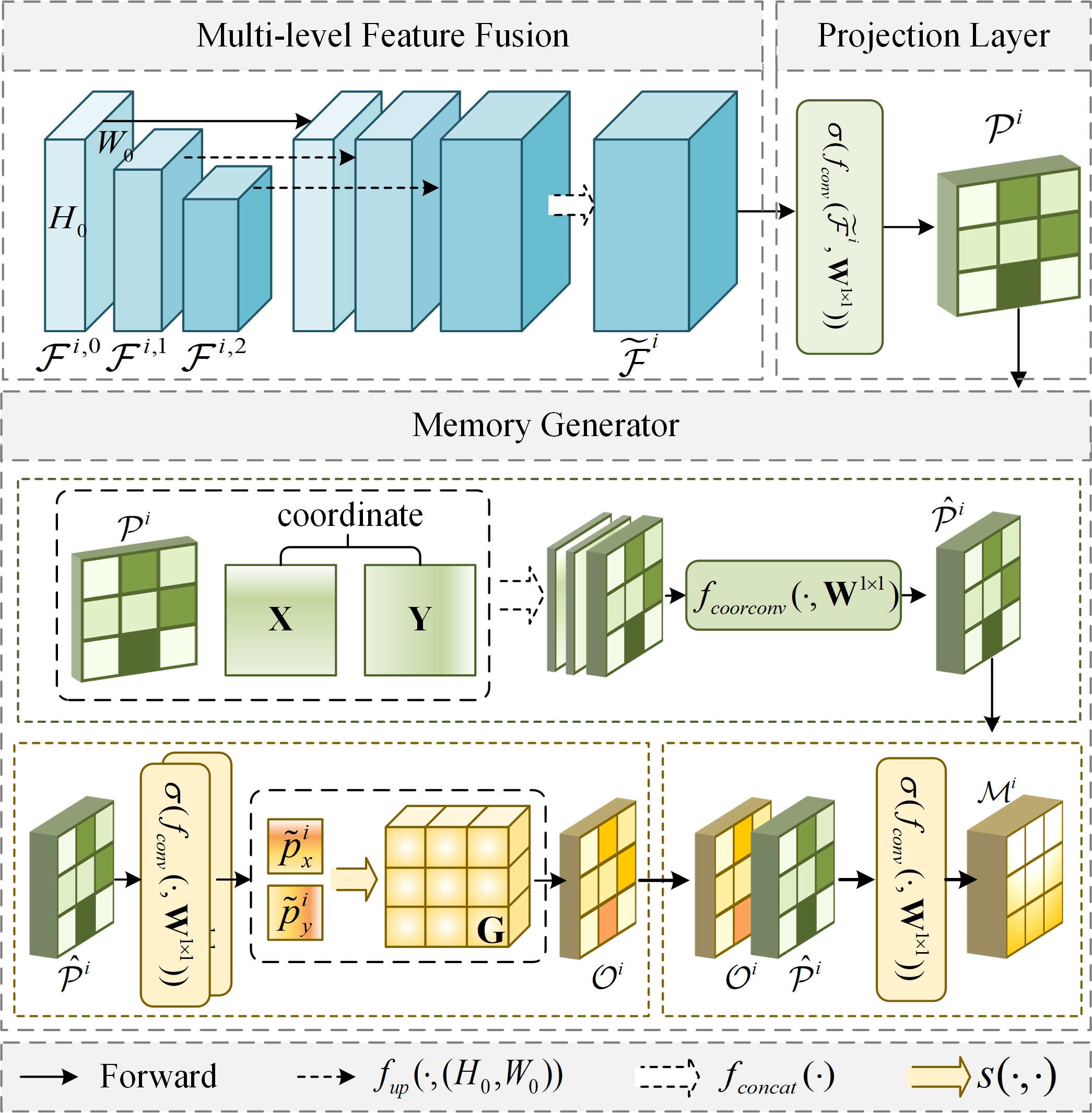}
\caption{
Illustration of the feature extraction process in FedDyMem. 
}
\label{fig_3}
\end{figure}

\subsection{Methodology for Obtaining ${\cal M}^{n,i,t}$}
Locally, FedDyMem employs a frozen CNN model ${\cal F}(\cdot)$ pre-trained on ImageNet to extract multi-level hierarchical features. Motivated by (\cite{roth2022towards}), FedDyMem performs multi-level feature fusion on the extracted features. As shown in Fig.~\ref{fig_3}, for a given sample ${x_i^n} \in {\mathbb R}^{H \times W \times 3}$ which is from the $n$-th client's ${\cal D}_n$, the extracted features are hierarchically organized into $L$ layers (e.g., $L=5$ in the case of ResNet (\cite{he2016deep})). We define ${\cal F}^{i,l} \in {\mathbb R}^{H_l \times W_l \times C_l}$ as the output features of the $l$-th layer, obtained from ${\cal F}({x_i^n})$, where $i$ denotes the $i$-th sample from local dataset, $l \in L$ denotes the $l$-th layer and $H_l$, $W_l$, $C_l$ are the size and channel of the feature map. In multi-level feature fusion, we employ a straightforward approach combining up-sampling and feature concatenation to extract features from pre-trained CNNs. This process can be  expressed as follows:
\begin{equation}
\label{eq8}
{\tilde{\cal F}^{i}} = {f_{concat}}(\{ {f_{up}}({\cal F}^{i,l},(H_0,W_0))\} _{l = 0}^L),
\end{equation}
where $f_{concat}(\cdot)$ represents the concatenation operation, $f_{up}(\cdot,(H,W))$ denotes bilinear interpolation to the spatial dimensions $(H, W)$. Feature extractors pretrained on large-scale natural image datasets often exhibit significant distribution shifts when applied to industrial or medical images. To address this issue, we introduce a projection layer ${\cal P}(\cdot, {\bf W}^{\cal P})$ designed to adapt the extracted features to the anomaly detection domain. As shown in Fig.~\ref{fig_3}, the projection layer is implemented as a $1 \times 1$ convolutional layer, enabling the output ${\cal P}^{i}$ of the projection to be computed as follows:
\begin{equation}
\label{eq9}
    {\cal P}^{i} = \sigma( f_{conv}(\tilde{\cal F}^{i}, {\bf W}^{1 \times 1}) ),
\end{equation}
where $f_{conv}(\cdot, {\bf W}^{q\times q})$ denotes $q \times q$ convolution layer, $\sigma(\cdot)$ denotes the activation function.

\begin{algorithm}[t]
    \caption{Memory Generator for the sample $x_i^n$ on Client $n$ at Round $t$.}
    \label{alg_memory_genertor}
    \begin{algorithmic}[1]
        \renewcommand{\algorithmicrequire}{ \textbf{Input:}}
        \renewcommand{\algorithmicensure}{ \textbf{Output:}}
        \Require $n$, $t$,$x_i^n \in {\cal D}_{n}^{train}$, ${\cal F}(\cdot)$, ${\cal P}^{n}(\cdot,{\bf W}^{\cal P})$, ${\cal G}^{n}(\cdot, {\bf W}^{\cal G};{\bf G})$.       
        \Ensure ${\cal M}^i$. 
        \State ${\{{{\cal F}^{(i,l)}}\}}_{l=1}^{L} \gets {\cal F}({{x_i^n}})$, where $L$ is determined by ${\cal F}$;
        \State ${\tilde{\cal F}^{i}} \gets$ Compute by Equation \eqref{eq8};
        \State ${\cal P}^{i} \gets$ Compute by Equation \eqref{eq9};
        \State ${\cal M}^{i} \gets {\cal G}^{n}({\cal P}^{i}, {\bf W}^{\cal G};{\bf G})$ with Equation \eqref{eq10} \eqref{eq11}; \\
        \Return ${\cal M}^{i}$
    \end{algorithmic}
\end{algorithm}

The memory bank-based approach (\cite{cohen2020sub, defard2021padim, roth2022towards}) has demonstrated significant performance in UAD. However, the memory banks utilized in these existing methods are non-trainable, meaning they remain unchanged following initialization. Due to the local bias in feature distributions for federated UAD, using the static memory bank will lead to overfitting on domain-specific data, increasing the risk of error. Furthermore, existing methods operate mainly on a discrete feature space, which increases the variations in feature distributions between clients for normal samples (\cite{lee2025continuous}). In this article, we propose a memory generator that integrates spatial information and ensures the continuity of feature space. As shown in Fig.~\ref{fig_3}, the memory generator ${\cal G}(\cdot, {\bf W}^{\cal G};{\bf G})$ utilizes a coordinate convolution layer $f_{coorconv}(\cdot, {\bf W}^{1\times 1})$ to encode spatial location information by incorporating additional coordinate channels, thereby facilitating the network's ability to learn spatial transformations more effectively (\cite{liu2018intriguing}). For the projected feature ${\cal P}^{i}$, the result of the coordinate convolution $\hat{{\cal P}^{i}} \in {\mathbb{R}^{H \times W \times C}} $ is computed as $\hat{{\cal P}^{i}} = f_{coorconv}({f_{concat}(\{{{\cal P}^{i},{\bf X},{\bf Y}}\})}, {\bf W}^{1\times 1})$, where $\bf X$ and $\bf Y$  represent the Cartesian coordinates of the feature map. Then, to enhance the continuity of the memory bank, the memory generator utilizes a grid-based approach to construct a continuous feature space. We define a trainable grid space as ${\bf G} \in {\mathbb{R}^{H^{\bf G}\times W^{\bf G}\times C}}$, where $H^{\bf G}$ and $W^{\bf G}$ are hyperparameters indicating the size of the continuous space. The memory generator maps $\hat{{\cal P}^{i}}$ to pixel-wise coordinates ${\dot{\cal P}^{i}} \in {\mathbb{R}^{H \times W \times 2}}$ through a mapping function $\phi(\cdot, {\bf W})\colon {\mathbb{R}^{H \times W \times C}} \to { {\mathbb{R}^{H \times W \times 2}}}$, implemented through two  $1\times1$ convolutional layers in our memory generator. A sample function ${s}(\cdot,\cdot)$ is employed to extract features from the continuous space ${\bf G}$ at specific coordinate values ${(p^i_x,p^i_y)} \in {\dot{\cal P}^{i}}$. The normalized coordinates are computed as follows:
\begin{equation}
\label{eq10}
    \tilde{p}^i_x = \frac{(p^i_x + 1)}{2} \cdot ({W^{\bf G}} - 1), \quad
\tilde{p}^i_y = \frac{(p^i_y + 1)}{2} \cdot ({H^{\bf G}} - 1),
\end{equation}
and the result of $s(\mathbf{G}, \dot{\cal P}^i)$ is defined as ${\cal O}^i$. Each pixel value in ${\cal O}^i$ can be sampled as:
\begin{gather}
\label{eq11}
     {\cal O}^i_{xy}= \sum\limits_{m = 0}^1 \sum\limits_{n = 0}^1 w_{mn}  {{\bf G}_{\left\lfloor {\tilde p_y^i} \right\rfloor  + m,\left\lfloor {\tilde p_x^i} \right\rfloor  + n}}, \textit{where} \\
     w_{mn} = {(1 - |\tilde p_x^i - (}  \left\lfloor {\tilde p_x^i} \right\rfloor  + n)|) (1 - |\tilde p_y^i - (\left\lfloor {\tilde p_y^i} \right\rfloor  + m)|), \notag
\end{gather}
where $\left\lfloor {\cdot} \right\rfloor$ represents the floor operation. Finally, the features ${\cal O}^i$, sampled from the continuous feature space, are concatenated with the coordinate convolution outputs $\hat{{\cal P}}^{i}$. This concatenated tensor is then passed through a convolutional layer to generate the memory feature ${\cal M}^{i} \in \mathbb{R}^{H \times W \times C}$ for client $n$ at round $t$. The complete procedure for obtaining the memory feature is presented in Algorithm \ref{alg_memory_genertor}.

\subsection{Computation of Local Metric Loss: $\ell({{\cal M}^{n,i,t}},{{\cal M'}^{n,t-1}})$} 
Considering the bias in feature distributions across clients, the training objective is to optimize the local memory generator to produce normal features that are more closely aligned with the shared global memory bank. FedDyMem introduces a simple metric loss to optimize the parameters of both the memory generator and the projection layer, thereby facilitating client models in generating consistent and high-quality memory features. Specifically,  the local memory bank for round $t$ is denoted as ${{\cal M'}^{n,t-1}} = \{m'_d\}_{d=0}^{H \times W} \in  {\mathbb{R}^{{(H \times W)}\times C}} $, where its size is reduced to $H \times W$ after the initialization on the $n$-th client and global aggregation by the server (details of memory-reduce and aggregation are provided in Section~\ref{memory-reduce} and Section~\ref{aggregation}). Similar to the patch-based memory bank, the memory features ${\cal M}^{n,i,t}$ can be expressed as a collection of patch features, $\{m_{(h,w)} \in {\mathbb{R}^{C}}\}_{h=0,w=0}^{H,W}$. The loss function $\ell$ is defined as follow:
\begin{equation}
\label{loss}
\begin{aligned}
&\ell({{\cal M}^{n,i,t}},{{\cal M'}^{n,t-1}}) \\
& = \frac{1}{HWK}\sum\limits_{h,w}^{H,W}  {\sum\limits_k^K {\max (0,{{\rm{dis}}({m_{(h,w)}},{m'}_k)-{th})}}}, \\
\end{aligned}
\end{equation}
where the function ${\rm{dis}}(\cdot,\cdot)$ represents the Euclidean distance metric in this article, $th$ is a hyperparamter to mitigate overfitting. Specifically, we employ a $K$-nearest neighbor (KNN) search to retrieve the top $K$ closest features from the memory bank corresponding to the generated features. Therefore, $k$ denotes the index in top $K$. FedDyMem uses the metric loss function to align the local features with the global memory bank. Following local training, the local memory bank is updated once before being uploaded to the server. This ensures that the local memory bank closely approximates the global memory bank in feature space.

\subsection{Memory-reduce for ${\{{\cal M}^{n,i,t}\}_{i=0}^{\left|{\cal D}_n^{train} \right|}} \to {\cal M}^{n,t}_{reduce}$}
\label{memory-reduce}

The earlier memory-based methods (\cite{cohen2020sub, roth2022towards}) involved storing all sample memory features from the training dataset ${\cal D}_n^{train}$ within the memory bank. Accordingly, the capacity of the memory bank scales proportionally with the size of the local dataset. For federated UAD, this design leads to high communication overhead when applied to large-scale datasets and also introduces privacy risks. When each memory feature corresponds directly to a raw input sample, it becomes vulnerable to privacy leakage under reconstruction or inversion attacks. Moreover, the varying sizes of memory banks also introduce challenges in the aggregation process. Motivated by this, FedDyMem incorporates a memory-reduce method to compress the collected memory features from the local dataset, thereby reducing the mutual information between the memory features and the raw data. The objective of memory-reduce can be expressed as follows:
\begin{equation}
\begin{aligned}
    &\{{\cal M}^{n,i,t}\}_{i=0}^{\left|{\cal D}_n^{train} \right|} \in {\mathbb{R}^ {{\left|{\cal D}_n^{train} \right|} \times H \times W \times C}} \\
    &\to {\cal M}^{n,t}_{reduce} \in \mathbb{R}^{H \times W \times C},
\end{aligned}
\end{equation}
where ${\cal M}^{n,i,t}$ represents the memory feature extracted by client $n$ for the $i$-th sample in the training dataset ${\cal D}_n^{\text{train}}$, $H$ and $W$ denote the spatial dimensions of the memory feature, and $C$ indicates the number of feature channels. 

\begin{figure}[t]
\centering
\includegraphics[width=0.4\textwidth]{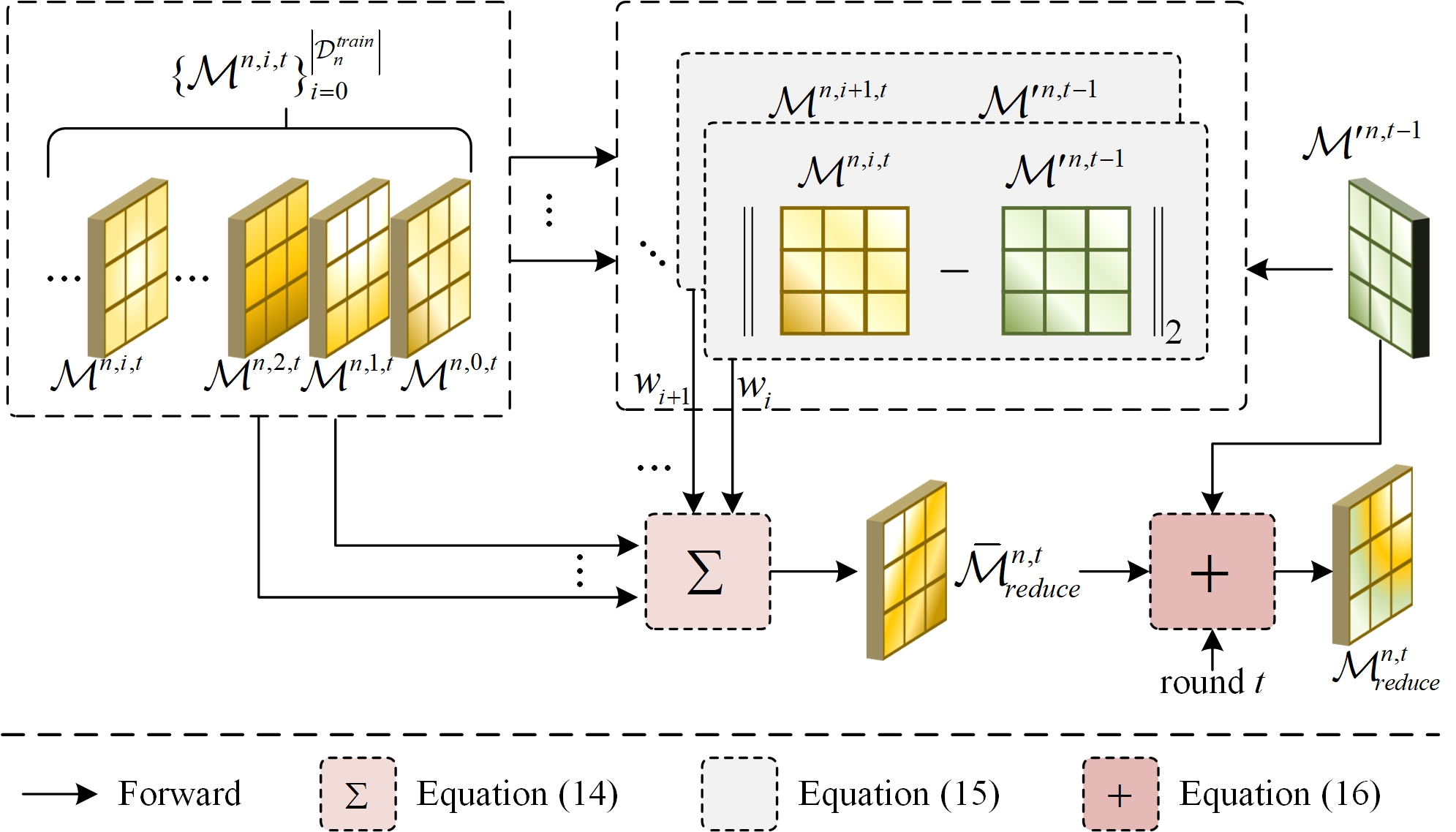} 
\caption{Illustration of Memory-reduce.}
\label{fig_4}
\end{figure}

To address the memory bank size issue, (\cite{lee2022cfa}) employs an Exponential Moving Average (EMA) method. However, the reduced memory bank remains sensitive to the sequence in which input samples are processed. In this article, we introduce a dynamic weighted average, called memory-reduce, that optimizes memory size while preserving sample order independence and effectively correlating with the number of communication rounds. As shown in Fig.~\ref{fig_4}, we aggregate all memory features, denoted as $\{{\cal M}^{n,i,t}\}_{i=0}^{\left|{\cal D}_n^{train} \right|}$ and apply a dynamic weighted averaging operation to compute the ${\bar{\cal M}}^{n,t}_{reduce}$ as:
\begin{equation}
\label{eq14}
{\bar {\cal M}}_{reduce}^{n,t} = \sum\limits_{i = 0}^{\left| {{{\cal D}}_n^{train}} \right|} {{w_{i,t}}{{{\cal M}}^{n,i,t}}} /\sum\limits_{i = 0}^{\left| {{{\cal D}}_n^{train}} \right|} {{w_{i,t}}},
\end{equation}
where the dynamic weight $w_{i,t}$ is computed as:
\begin{equation}
\label{eq15}
w_{i,t} = \begin{cases}
1,&{\text{if}}\ {t=0} \\ 
{\Vert {{\cal M}}^{n,i,t} -  {{\cal M'}^{n,t-1}}\Vert_2,}&{\text{otherwise.}} 
\end{cases}
\end{equation}
Here, the dynamic weight $w_{i,t}$ is influenced by the distance between the reduced memory representation ${{\cal M}}^{n,i,t}$ and the previous memory bank ${{\cal M'}^{n,t-1}}$.

\begin{algorithm}[t]
    \caption{Memory Reduce for the ${\{{\cal M}^{n,i,t}\}_{i=0}^{\left|{\cal D}_{n}^{train}\right|}}$ on Client $n$ at Round $t$.}
    \label{alg_memory_reduce}
    \begin{algorithmic}[1]
        \renewcommand{\algorithmicrequire}{ \textbf{Input:}}
        \renewcommand{\algorithmicensure}{ \textbf{Output:}}
        \Require $n$, $t$, ${\{{\cal M}^{n,i,t}\}_{i=0}^{\left|{\cal D}_{n}^{train}\right|}}$, ${{\cal M'}^{n,t-1}}$.       
        \Ensure ${\cal M}^{n,0}_{reduce}$. 
        \For{$i=0$ to ${\left|{\cal D}_{n}^{train}\right|}$}
        \If{t \textgreater 0}
        \State $w_{i,t} \gets$ Compute by Equation \eqref{eq15} with ${\cal M}^{n,i,t}$ and ${{\cal M'}^{n,t-1}}$;
        \Else 
        \State $w_{i,t} \gets 1$;
        \EndIf
        \EndFor
        \State ${\bar {\cal M}}_{reduce}^{n,t} \gets$ Compute by Equation \eqref{eq14} with $\{{w_{i,t}}\}_{i=0}^N$ and $\{{\cal M}^{n,i,t}\}_{i=0}^N$;
        \If{t \textgreater 0}
        \State $\alpha \gets 1/(t+1)$;
        \State ${\cal M}^{n,t}_{reduce} \gets$ Compute by Equation \eqref{eq16} with ${\bar {\cal M}}_{reduce}^{n,t}$, ${{\cal M'}^{n,t-1}}$ and $\alpha$;
        \Else 
        \State ${\cal M}^{n,t}_{reduce} \gets {\bar {\cal M}}_{reduce}^{n,t}$; 
        \EndIf \\
        \Return ${\cal M}^{n,t}_{reduce}$
    \end{algorithmic}
\end{algorithm}

\begin{algorithm}[thp]
    \caption{Aggregation on Server $\cal S$ for Round $t$.}
    \label{alg_aggregation}
    \begin{algorithmic}[1]
        \renewcommand{\algorithmicrequire}{ \textbf{Input:}}
        \renewcommand{\algorithmicensure}{ \textbf{Output:}}
        \Require  $t$ ,${\{{\cal M}^{n,t}_{reduce}\}_{n=0}^N}$.       
        \Ensure $\bar{\cal M}^{t}$. 
        \State $m \gets$[];
        \State $H,W,C \gets {\cal M}^{n,t}_{reduce}.size()$;
        \State $K \gets HW$;
        \For{$n=0$ to $N$}
        \State $\{m_{(h,w)}^{n}\}_{(h=0,w=0)}^{(H,W)} \gets f_{resize}({{\cal M}^{n,t}_{reduce}},(HW,C))$;
        \State $m \gets m \cup \{m_{(h,w)}^{n}\}_{(h=0,w=0)}^{(H,W)}$;
        \EndFor
        \State $\{c_k \in {\mathbb{R}^{C}}\}_{k=0}^{K} \gets K$-means$(m)$;
        \State $\bar{\cal M}^{t} \gets {f_{resize}}(\{c_k\}_{k=0}^{K},(H,W,C))$; \\
    \Return $\bar{\cal M}^{t}$
    \end{algorithmic}
\end{algorithm}

After computing ${\bar {\cal M}}_{reduce}^{n,t}$, a round-based EMA (rEMA) method is employed to update the memory bank. This approach ensures smoother and more stable updates during the training process. The update rule is defined as:
\begin{equation}
\label{eq16}
    {\cal M}^{n,t}_{reduce} = \alpha{\bar {\cal M}}_{reduce}^{n,t} + (1-\alpha){{\cal M'}^{n,t-1}},
\end{equation}
where $\alpha$ represents the exponential decay rate, dynamically calculated as  $\alpha = 1/ (t+1)$. The Memory-reduce is detailed in Algorithm \ref{alg_memory_reduce}.

\begin{algorithm}[thp]
    \caption{Initialization at Round $0$.}
    \label{alg_init}
    \begin{algorithmic}[1]
        \renewcommand{\algorithmicrequire}{ \textbf{Input:}}
        \renewcommand{\algorithmicensure}{ \textbf{Output:}}
        \Require  ${\{{\cal D}_{n}^{train}\}}_{n=0}^{N}$, ${\cal F}(\cdot)$, ${\cal P}^{n}(\cdot,{\bf W}^{\cal P})$, ${\cal G}^{n}(\cdot, {\bf W}^{\cal G};{\bf G})$.       
        \Ensure ${\{{\cal M'}^{n,0}\}}_{n=0}^{N}$
        \For{each client n \textbf{in parallel}}
        \State Initialize ${\bf W}^{\cal P}$, ${\bf W}^{\cal G}$ randomly for ${\cal P}^{n}(\cdot,{\bf W}^{\cal P})$, ${\cal G}^{n}(\cdot, {\bf W}^{\cal G};{\bf G})$;
        \State Initialize $\bf G$ by Xavier normal initialization for ${\cal G}^{n}(\cdot, {\bf W}^{\cal G};{\bf G})$;
        \For{$i=0$ to $\left|{\cal D}_{n}^{train}\right|$}
        \State ${\cal M}^{i} \gets$ Compute by Algorithm \ref{alg_memory_genertor} with $t=0$, $n$, $x_i^n$, ${\cal F}(\cdot)$, ${\bf W}^{\cal P}$, ${\bf W}^{\cal G}$, ${\bf G}$;
        \EndFor
        \State ${\{{\cal M}^{n,i,0}\}_{i=0}^{\left|{\cal D}_{n}^{train}\right|}} \gets [{\cal M}^{0}, {\cal M}^{1}, \cdots]$;
        \State ${\cal M}^{n,0}_{reduce} \gets$ Compute by Algorithm \ref{alg_memory_reduce} with $n$, $t=0$ and ${\{{\cal M}^{n,i,0}\}_{i=0}^{\left|{\cal D}_{n}^{train}\right|}}$;
        \State ${\bf W}^{{\cal P},n,0} \gets {\bf W}^{\cal P}$, ${\bf W}^{{\cal G},n,0} \gets {\bf W}^{\cal G}$, ${{\bf G}^{n,0}} \gets {\bf G}$ 
        \EndFor
        \State ${\{{\cal M}^{n,0}_{reduce}\}_{n=0}^N} \gets [{\cal M}^{0,0}_{reduce}, {\cal M}^{1,0}_{reduce}, \cdots]$;
        \State $\bar{\cal M}^{0} \gets$ Compute by Algorithm \ref{alg_aggregation} with $t=0$ and ${\{{\cal M}^{n,0}_{reduce}\}_{n=0}^N}$;
        \For{$n=0$ to $N$}
        \State ${\cal M'}^{n,0} \gets \bar{\cal M}^{0}$;
        \EndFor \\
        \Return ${\{{\cal M'}^{n,0}\}}_{n=0}^{N} \gets [{\cal M'}^{0,0}, {\cal M'}^{1,0}, \cdots]$
    \end{algorithmic}
\end{algorithm}

\subsection{$\bar{\cal M}^{t} \stackrel{agg}{\longleftarrow} {\{{\cal M}^{n,t}_{reduce}\}_{n=0}^N}$ on Server}
\label{aggregation}
At round $t$, the server $\cal S$ collects the newly generated memory banks from $N$ clients, denoted as ${\{{\cal M}^{n,t}_{reduce}\}_{n=0}^N} \in {{\mathbb{R}^{N\times H\times W\times C}}}$. The objective of the server is to aggregate these memory banks into a unified, representative memory bank $\bar{\cal M}^{t}$. This process is formalized as ${\cal S}(\cdot,\cdot)\colon {{\mathbb{R}}^{N\times H\times W\times C}} \to {{\mathbb{R}}^{H\times W\times C}}$. Representation-based federated learning approaches, such as those proposed in (\cite{tan2022fedproto, tan2022federated}), utilize weighted aggregation of class prototypes during model updates. However, significant feature bias in the early training stages leads to aggregated memory that deviates significantly from the client-specific memory, which often contains incomplete product types. This misalignment hinders convergence, posing a challenge for clients with limited data diversity. Thus, we employ a clustering-based approach to aggregate the collected memory banks effectively. Specifically, the memory features are represented as $N \times H \times W$ patch features, denoted by ${m_{(h,w)}^{n}} \in {\mathbb{R}^C}$. These patch features are subsequently clustered using the $K$-means algorithm, where the number of clusters, $K$, corresponds to $H \times W$. The clustering results indicate the overall distribution and similarity relationships between the patch-level features stored across all banks at the current stage of the process. Cluster centers describe the representative features of a memory bank. Let $c_k \in {\mathbb{R}^C}$ denote the center of the $k$-th cluster, serving as a global memory bank updated by $\bar{\cal M}^{t} = {f_{resize}}(\{c_k\}_{k=0}^{K},(H,W,C))$ and shared across individual clients, $\forall n\in N, \ {\cal M'}^{n,t} = \bar{\cal M}^{t}$. The aggregation methodology is outlined in Algorithm \ref{alg_aggregation}, while the comprehensive training procedure for FedDyMem is presented in Algorithm \ref{alg_feddymem}.

\begin{algorithm}[t]
    \caption{Training for FedDyMem.}
    \label{alg_feddymem}
    \begin{algorithmic}[1]
        \renewcommand{\algorithmicrequire}{ \textbf{Input:}}
        \renewcommand{\algorithmicensure}{ \textbf{Output:}}
        \Require $N$ Clients $\{{\cal C}_n\}_{n=1}^N$ with ${\{{\cal D}_{n}^{train}\}}_{n=0}^{N}$, one server $\cal S$, the number of rounds $T$, the local epochs $E$.       
        \State ${\{{\cal M'}^{n,0}\}}_{n=0}^{N} \gets$  Initialization by Algorithm \ref{alg_init};
        \For{$t=1$ to $T$}
        \For{each client $n$ \textbf{in parallel}}
        \State ${\cal M}^{n,t}_{reduce} \gets$ \textit{ClientUpdate(n, t, E, ${\cal D}_n^{train}$)};
        \EndFor
        \State ${\{{\cal M}^{n,t}_{reduce}\}_{n=0}^N} \gets$ Collection all memory bank from $N$ clients;
        \State $\bar{\cal M}^{t} \gets$ Upload ${\{{\cal M}^{n,t}_{reduce}\}_{n=0}^N}$ to $\cal S$  and computed by Algorithm \ref{alg_aggregation};
        \EndFor
        \State Save all well trained $\{$${\bf W}^{{\cal P},n,T}$, ${\bf W}^{{\cal G},n,T}$, ${{\bf G}^{n,T}}\}_{n=0}^{N}$ and $\bar{\cal M}^{T}$ locally.
        \renewcommand{\algorithmicrequire}{ \underline{\textit{ClientUpdate(n, t, E, ${\cal D}_n^{train}$):}}}
        \Require
        \For{$e=0$ to $E$}
        \For{$i=0$ to $\left|{\cal D}_{n}^{train}\right|$}
        \State ${\cal M}^{i} \gets$ Compute by Algorithm \ref{alg_memory_genertor} with $t$, $n$, $x_i^n$;
        \State $\ell \gets$ Compute loss by Equation \eqref{loss} with ${\cal M}^{i}$ and ${\cal M'}^{n,t-1}$;
        \State Update $\{$${\bf W}^{{\cal P},n,t}$, ${\bf W}^{{\cal G},n,t}$, ${{\bf G}^{n,t}}\}$ by $\ell$ and Adam Optimizer;
        \EndFor
        \EndFor

        \State ${\{{\cal M}^{n,i,t}\}_{i=0}^{\left|{\cal D}_{n}^{train}\right|}} \gets$ Compute $[{\cal M}^{0}, {\cal M}^{1}, \cdots]$ by trained $\{$${\bf W}^{{\cal P},n,t}$, ${\bf W}^{{\cal G},n,t}$, ${{\bf G}^{n,t}}\}$;
        \State ${\cal M}^{n,t}_{reduce} \gets$ Compute by Algorithm \ref{alg_memory_reduce} with $n$, $t$ and ${\{{\cal M}^{n,i,t}\}_{i=0}^{\left|{\cal D}_{n}^{train}\right|}}$; \\
        \Return ${\cal M}^{n,t}_{reduce}$
    \end{algorithmic}
\end{algorithm}

\subsection{Testing for Anomaly Detection on Clients}

During the testing phase, as shown in Fig.~\ref{fig_2}(b), each client employs the globally aggregated memory bank $\bar{\cal M}^T$, received from the server at the end of training, to perform anomaly detection on its local test samples. Following prior works (\cite{roth2022towards, lee2022cfa}), we utilize a nearest neighbor search in the feature space to calculate the anomaly score for each test sample $x^{test} \in {{\cal D}^{test}}$.

Given a test image $x^{test}$ on $n$-th client, the well trained local models ${\bf W}^{{\cal P},n,T}$, ${\bf W}^{{\cal G},n,T}$ and ${\bf G}^{n,T}$ are used to extract the memory feature representation $ {\cal M}^{test}\in \mathbb{R}^{H \times W \times C} $ following the same procedure as in training. For each spatial location  $(h, w)$, the corresponding feature patch $ m^{test}_{(h,w)} \in \mathbb{R}^C $ is compared with the final global memory bank $ \bar{\cal M}^T = \{ \bar{m}_d \in \mathbb{R}^C \}_{d=1}^{H \times W} $ to compute its anomaly score. The pixel-wise anomaly score $a_{(h,w)}$ is computed as the minimum Euclidean distance to its $K$ nearest neighbors in the memory bank. And the image-level anomaly score ${{\cal A}^{test}}$ is computed  by the softmax-weighted maximum strategy:
\begin{equation}
{{\cal A}^{test}} = \max_{(h,w)} \left( a_{(h,w)} \cdot \frac{\exp(a_{(h,w)})}{\sum_{h',w'} \exp(a_{(h',w')})} \right).
\label{eq:image_score}
\end{equation}

\subsection{Privacy Analysis}
 In federated learning, Mutual Information(MI) is a fundamental tool for evaluating privacy. Prior studies (\cite{zhang2023fedcr, alsulaimawi2024federated,ji2025re,tan2024defending}) indicate that MI provides a principled paradigm for privacy analysis and can jointly improve privacy and utility. Consistent with prior work, we adopt a MI–based methodology to analyze FedDyMem’s privacy. For two statistics $X$ and $Y$, their mutual information $I(X;Y)$ quantifies statistical dependence; under Gaussian assumptions,
\begin{equation}
    I(X;Y) = -\tfrac{1}{2}\log\!\big(1-\rho_{xy}^2\big).
\end{equation}
where $\rho_{xy}$ denotes the Pearson correlation coefficient. A smaller $I(X;Y)$ implies that an adversary can extract less information about $X$ from observing $Y$, indicating stronger privacy.

In our FedDyMem, let $x^j$ be a raw sample, ${\cal M}^j=f(x^j)$ denotes its corresponding memory feature. The reduced memory ${\cal M}_{reduce} \gets [{\cal M}^{0}, {\cal M}^{1}, \cdots]$ is obtained via weighted averaging and the rEMA update, which induces the Markov chain $x^j \to ({\cal M}^{0}, {\cal M}^{1}, \cdots) \to {\cal M}_{reduce}$. By the data–processing inequality (DPI), the mutual information decreases after memory-reduce, i.e.,
\begin{equation}
I(x^j;{\cal M}_{reduce}) \le I(x^j;{\cal M}^j),
\end{equation}
which qualitatively captures the privacy effect of the memory–reduce module. Building on this Lemma~\ref{MI-privacy-lemma2} together with Theorem~\ref{thm:privacy} provides a more rigorous and quantitative formulation of this privacy–preserving capability.

\begin{lemma}[Correlation Quantitative Bound]\label{MI-privacy-lemma2}
Let $\{Y_i\}_{i=1}^N$ be scalar statistics and define the normalized weighted average $Y=\sum_{i=1}^N \tilde w_i Y_i$ with $\tilde w_i\ge0$, $\sum_i\tilde w_i=1$.
Assume:
\begin{enumerate}
    \item $X_j$ is independent of $Y_i$ for $i\neq j$.
    \item Each $Y_i$ has finite, strictly positive variance $\sigma_{Y_i}^2$, and $\mathrm{Cov}(Y_i,Y_k)\ge0$ for $i\neq k$.
\end{enumerate}
Denote $\sigma_{Y_i}^2=\mathrm{Var}(Y_i)$, then
\begin{equation}
|\rho(X_j, Y)| \;\le\;  
\frac{\tilde w_j\,\sigma_{Y_j}}{\sqrt{\sum_{i=1}^N \tilde w_i^2\,\sigma_{Y_i}^2}}
\,|\rho(X_j, Y_j)|.
\end{equation}
\end{lemma}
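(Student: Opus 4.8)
The plan is to work directly with the definition of the Pearson correlation and the linearity of covariance. Write $\rho(X_j,Y)=\mathrm{Cov}(X_j,Y)/(\sigma_{X_j}\sigma_Y)$, and expand $\mathrm{Cov}(X_j,Y)=\sum_{i=1}^N \tilde w_i\,\mathrm{Cov}(X_j,Y_i)$ by bilinearity. Assumption~1 ($X_j \perp Y_i$ for $i\neq j$) kills every term except $i=j$, leaving $\mathrm{Cov}(X_j,Y)=\tilde w_j\,\mathrm{Cov}(X_j,Y_j)$. Substituting $\mathrm{Cov}(X_j,Y_j)=\rho(X_j,Y_j)\,\sigma_{X_j}\sigma_{Y_j}$ gives
\begin{equation}
|\rho(X_j,Y)| = \frac{\tilde w_j\,\sigma_{Y_j}}{\sigma_Y}\,|\rho(X_j,Y_j)|.
\end{equation}
So the entire problem reduces to lower-bounding $\sigma_Y$.

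Next I would lower-bound the variance of the weighted average. By bilinearity again, $\sigma_Y^2 = \mathrm{Var}\!\big(\sum_i \tilde w_i Y_i\big) = \sum_{i=1}^N \tilde w_i^2\,\sigma_{Y_i}^2 + \sum_{i\neq k}\tilde w_i\tilde w_k\,\mathrm{Cov}(Y_i,Y_k)$. Assumption~2 supplies two facts: each $\sigma_{Y_i}^2>0$ (so the diagonal sum is strictly positive and the expression below is well-defined) and $\mathrm{Cov}(Y_i,Y_k)\ge 0$ for $i\neq k$ together with $\tilde w_i\ge 0$, so the cross terms are nonnegative. Hence $\sigma_Y^2 \ge \sum_{i=1}^N \tilde w_i^2\,\sigma_{Y_i}^2 > 0$, i.e. $\sigma_Y \ge \sqrt{\sum_i \tilde w_i^2 \sigma_{Y_i}^2}$. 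Plugging this into the displayed equality and noting that $\sigma_Y$ appears in the denominator yields
\begin{equation}
|\rho(X_j,Y)| = \frac{\tilde w_j\,\sigma_{Y_j}}{\sigma_Y}\,|\rho(X_j,Y_j)| \;\le\; \frac{\tilde w_j\,\sigma_{Y_j}}{\sqrt{\sum_{i=1}^N \tilde w_i^2\,\sigma_{Y_i}^2}}\,|\rho(X_j,Y_j)|,
\end{equation}
which is exactly the claimed bound.

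I do not expect a genuine obstacle here — the argument is a two-step unwinding of definitions plus a nonnegativity observation. The only points requiring a little care are bookkeeping ones: (i) making sure the independence assumption is invoked correctly, namely that it is enough for the \emph{covariance} $\mathrm{Cov}(X_j,Y_i)$ to vanish for $i\neq j$ (full independence certainly implies this, but uncorrelatedness would already suffice, so the hypothesis is not tight — worth a remark but not needed); (ii) checking that the denominator $\sqrt{\sum_i \tilde w_i^2 \sigma_{Y_i}^2}$ is strictly positive so that the division is legitimate — this follows because $\tilde w_j>0$ would be needed for the bound to be non-vacuous, but even if $\tilde w_j=0$ the left side is $0$ and the inequality holds trivially provided at least one $\tilde w_i>0$, which is forced by $\sum_i \tilde w_i = 1$; and (iii) confirming that $\sigma_{X_j}$ cancels cleanly, which it does since it appears in both $\mathrm{Cov}(X_j,Y_j)=\rho(X_j,Y_j)\sigma_{X_j}\sigma_{Y_j}$ and in $\rho(X_j,Y)=\mathrm{Cov}(X_j,Y)/(\sigma_{X_j}\sigma_Y)$. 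Once these are dispatched, the bound on $|\rho|$ feeds directly into the Gaussian mutual-information formula $I = -\tfrac12\log(1-\rho^2)$ used elsewhere in the privacy analysis, giving the quantitative privacy statement of Theorem~\ref{thm:privacy}.
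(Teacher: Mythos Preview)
Your proposal is correct and follows essentially the same approach as the paper: expand $\mathrm{Cov}(X_j,Y)$ by linearity, use independence to isolate the $i=j$ term, then lower-bound $\sigma_Y^2$ by dropping the nonnegative cross-covariances. Your additional bookkeeping remarks (well-definedness of the denominator, the $\tilde w_j=0$ edge case, and that uncorrelatedness suffices) are sound and slightly more careful than the paper's own write-up.
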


\begin{proof}
By linearity and independence, $\mathrm{Cov}(X_j,Y)=\tilde w_j\,\mathrm{Cov}(X_j,Y_j)$, hence 
\begin{equation}
\label{eq22}
|\rho(X_j,Y)| 
= \frac{|\mathrm{Cov}(X_j,Y)|}{\sigma_{X_j}\,\sigma_Y}
= \frac{\tilde w_j\,|\mathrm{Cov}(X_j,Y_j)|}{\sigma_{X_j}\,\sigma_Y}.
\end{equation}

For the denominator,
\begin{equation}
\mathrm{Var}(Y) = \sum_{i=1}^N \tilde w_i^2\,\sigma_{Y_i}^2 
+ 2\sum_{1\le i<k\le N} \tilde w_i \tilde w_k\,\mathrm{Cov}(Y_i,Y_k).
\end{equation}
Since $\mathrm{Cov}(Y_i,Y_k)\ge 0$, with $\sigma_Y = \sqrt{\mathrm{Var}(Y)}$, the variance satisfies
$\sigma_Y = \sqrt{\mathrm{Var}(Y)} \ge \sqrt{\sum_{i=1}^N \tilde w_i^2\,\sigma_{Y_i}^2}$.

Substituting this bound into the denominator of \eqref{eq22} with $|\mathrm{Cov}(X_j,Y_j)|=\sigma_{X_j}\sigma_{Y_j}|\rho(X_j,Y_j)|$, completes the proof.
\end{proof}

\begin{theorem}[Memory-Reduce Privacy]\label{thm:privacy}
For any client sample $x^j$ with memory feature ${\cal M}^j$ and reduced memory ${\cal M}_{reduce}$,
\begin{equation}
I(x^j;{\cal M}_{reduce}) \ll I(x^j;{\cal M}^j).
\end{equation}
\end{theorem}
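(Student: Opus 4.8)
The plan is to sharpen the qualitative data-processing bound $I(x^j;{\cal M}_{reduce})\le I(x^j;{\cal M}^j)$ already noted above into a quantitative ratio estimate, by chaining Lemma~\ref{MI-privacy-lemma2} with the Gaussian identity $I(X;Y)=-\tfrac12\log(1-\rho_{xy}^2)$. First I would pass to a single scalar coordinate of the memory tensor: fix that coordinate, let $Y_i$ be its value in ${\cal M}^{n,i,t}$, let $X_j$ be the matching scalar summary of the raw sample $x^j$, and write $N'=\lvert{\cal D}_n^{train}\rvert$. Equation~\eqref{eq14} then exhibits the averaged part of the reduced memory as the normalized weighted mean $Y=\sum_{i=0}^{N'}\tilde w_i Y_i$ with $\tilde w_i=w_{i,t}/\sum_k w_{k,t}\ge0$ and $\sum_i\tilde w_i=1$, so the next task is to check the hypotheses of Lemma~\ref{MI-privacy-lemma2}. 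Conditioning on the round-$(t-1)$ parameters $({\bf W}^{{\cal P},n,t-1},{\bf W}^{{\cal G},n,t-1},{\bf G}^{n,t-1})$, each $Y_i$ is a deterministic image of $x^i$; since the client's training samples are i.i.d.\ from $\mathbb{P}_n({\cal X})$, this gives $X_j\perp Y_i$ for $i\ne j$. The metric loss \eqref{loss} pulls every generated feature toward the shared bank ${\cal M'}^{n,t-1}$, so the $Y_i$ move together and $\mathrm{Cov}(Y_i,Y_k)\ge0$; strictly positive finite variances are generic.

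Given the hypotheses, Lemma~\ref{MI-privacy-lemma2} delivers $\lvert\rho(X_j,Y)\rvert\le\kappa\,\lvert\rho(X_j,Y_j)\rvert$ with $\kappa=\tilde w_j\sigma_{Y_j}\big/\sqrt{\sum_i\tilde w_i^2\sigma_{Y_i}^2}$. The dynamic weights $w_{i,t}=\lVert{\cal M}^{n,i,t}-{\cal M'}^{n,t-1}\rVert_2$ from \eqref{eq15}, together with the alignment forced by \eqref{loss}, keep both the $\tilde w_i$ and the per-sample standard deviations $\sigma_{Y_i}$ of a common order $\sigma$ across $i$; then $\tilde w_j=\Theta(1/N')$ and $\sqrt{\sum_i\tilde w_i^2\sigma_{Y_i}^2}=\Theta(\sigma/\sqrt{N'})$, whence $\kappa=\Theta(1/\sqrt{N'})$ and $\rho_{reduce}^2\lesssim\rho_j^2/N'$. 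Since $u\mapsto-\log(1-u)/u$ is increasing on $(0,1)$, pushing $\rho_j^2$ down by a factor $1/N'$ pushes the (Gaussian) mutual information down by at least the same factor:
\begin{equation}
I(x^j;{\cal M}_{reduce})\;\le\;\tfrac{1}{N'}\,I(x^j;{\cal M}^j).
\end{equation}
I would then fold in the round-based EMA \eqref{eq16}, ${\cal M}^{n,t}_{reduce}=\alpha\,\bar{\cal M}^{n,t}_{reduce}+(1-\alpha){\cal M'}^{n,t-1}$ with $\alpha=1/(t+1)$: its second summand carries only the residual dependence on $x^j$ that was already attenuated in earlier rounds, so modelling the cross-round contribution as further independent mixing multiplies the correlation by an extra factor at most $\alpha$, sharpening the bound to $I(x^j;{\cal M}_{reduce})\lesssim I(x^j;{\cal M}^j)\big/\big(N'(t+1)^2\big)$. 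For any realistic local dataset size and round count the right-hand side is orders of magnitude below $I(x^j;{\cal M}^j)$, which is exactly what $I(x^j;{\cal M}_{reduce})\ll I(x^j;{\cal M}^j)$ asserts.

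The hard part is that ``$\ll$'' is not a formally defined relation, so the argument has to commit to a concrete surrogate---here the ratio bound $I(x^j;{\cal M}_{reduce})/I(x^j;{\cal M}^j)\le 1/\big(N'(t+1)^2\big)$---and then defend the assumptions under it: the conditional independence of $x^j$ from $\{{\cal M}^{i}\}_{i\ne j}$ (which ignores the weak coupling introduced by the $E$ local epochs of SGD and is cleanest if the memory-preparation forward pass is treated as run with parameters frozen at the round's start), the non-negativity of $\mathrm{Cov}(Y_i,Y_k)$ (plausible from the metric-loss alignment but not automatic), the comparability of the dynamic weights and of the per-sample variances (an empirical regularity of the scheme, which I would back with measurements rather than prove), and the Gaussian surrogate behind the MI identity (standard in the cited privacy literature and conservative, since at fixed correlation non-Gaussian dependence can only lower $I$). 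Finally, lifting the scalar Lemma~\ref{MI-privacy-lemma2} to the tensor-valued memory features through the canonical correlations of jointly Gaussian vectors is routine but notationally heavy, so I would record it as a remark and keep the main argument scalar.
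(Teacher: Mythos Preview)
Your approach is essentially the paper's: invoke Lemma~\ref{MI-privacy-lemma2} on the weighted average \eqref{eq14} to bound $|\rho(x^j,{\cal M}_{reduce})|$ by a factor $\kappa$ times $|\rho(x^j,{\cal M}^j)|$, then translate the correlation bound into a mutual-information ratio via the Gaussian identity and monotonicity, and finally argue that the scaling factor $\tilde w_j^2\sigma_{Y_j}^2/\sum_i\tilde w_i^2\sigma_{Y_i}^2$ is $\ll 1$ because the denominator grows with $|{\cal D}_n^{train}|$. The paper's proof is exactly this, but terser: it writes the ratio bound directly and stops at \eqref{eq27} without the explicit $1/N'$ calibration, without folding in the rEMA step \eqref{eq16}, and without the careful discussion you give of the Lemma's hypotheses (conditional independence, non-negative cross-covariance, comparable weights and variances) or the scalar-to-tensor lift. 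So your proposal is correct and follows the same route; the extra $(t+1)^{-2}$ factor from the rEMA and the explicit assumption audit are refinements the paper omits.
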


\begin{proof}
Following \eqref{eq14}\eqref{eq16} we define $x^j$ as $X_j$ and ${\cal M}_{reduce}$ as the normalized weighted average $Y$ in Lemma~\ref{MI-privacy-lemma2}. Thus, the correlation between $x^j$ and the reduced memory satisfies
\begin{equation}
|\rho(x^j,{\cal M}_{reduce})| 
\le  \frac{\tilde w_j\sigma_{{\cal M}^j}}{\sqrt{\sum_{i=1}^{|{\cal D}|} \tilde w_i^2\sigma_{{\cal M}^j}^2}}
|\rho(x^j,{\cal M}^j)|,
\end{equation}
where $|{\cal D}|$ denotes the number of local samples at the client.

Since mutual information is monotone in $|\rho|$, we obtain
\begin{equation}
\frac{I(x^j;{\cal M}_{reduce})}{I(x^j;{\cal M}^j)}
\le
\frac{-\tfrac{1}{2}\log\!\Big(1 -  
\frac{\tilde w_j^2 \sigma_{{\cal M}^j}^2}{\sum_{i=1}^{|{\cal D}|} \tilde w_i^2\sigma_{{\cal M}^j}^2}
\rho(x^j,{\cal M}^j)^2\Big)}
     {-\tfrac{1}{2}\log\!\big(1-\rho(x^j,{\cal M}^j)^2\big)}.
\end{equation}

Moreover, 
since the averaging denominator grows with the number of samples $|{\cal D}|$, the effective scaling factor
\begin{equation}
\label{eq27}
\frac{\tilde w_j^2 \sigma_{{\cal M}^j}^2}{\sum_{i=1}^{|{\cal D}|} \tilde w_i^2\sigma_{{\cal M}^j}^2} \;\ll\; 1,
\end{equation}
which proves the Theorem~\ref{thm:privacy}.
\end{proof}

By \eqref{eq27}, the per-sample scaling vanishes as $|{\cal D}|$ grows.  Under the bounded-variance conditions in Lemma~\ref{MI-privacy-lemma2}, this yields $|\rho(x^j,{\cal M}_{reduce})| \to 0$ and therefore ${I(x^j;{\cal M}_{reduce})}$ as $|{\cal D}| \to \infty$, making the per-sample leakage asymptotically negligible. This establishes that the privacy leakage diminishes both with more samples, providing a strong quantitative guarantee, confirming that the proposed memory-reduce is essential for privacy preservation. Furthermore, FedDyMem exchanges statistical summaries (memory banks) rather than model parameters (\cite{tan2022fedproto}) and the transmitted memory features are derived from multi-layer, low-dimensional representations via irreversible mappings (\cite{li2023prototype}). 

\section{Convergence Analysis}
\label{convergence}
We analyze the convergence of FedDyMem under a non-convex loss.
Let the loss at local iteration $e$ in communication round $t$ be
$\mathcal L_{t,e} := \mathbb E_{x\sim\mathcal D_n}\!\big[\ell(\mathcal M^{t,e},\ \mathcal M'^{\,t-1})\big]$,
where $\mathcal M^{t,e}=f(x;\theta_{t,e})$ and its Jacobian as $J_f(x;\theta_{t,e})$.
Here $\mathcal M'^{\,t-1}$ is the center memory received at round $t$.
Some common assumptions based on existing frameworks are reasonably presented below.

\begin{assumption}[$L_M$-smooth]\label{assump:smoothM}
Within a fixed communication round $t$, treat $\mathcal M'^{\,t-1}$ as a constant. Excluding the measure-zero non-differentiable set,
\begin{equation}
\|\nabla_{\mathcal M}\ell(\mathcal M,\mathcal M')-\nabla_{\mathcal M}\ell(\tilde{\mathcal M},\mathcal M')\|_2\le L_M\|\mathcal M-\tilde{\mathcal M}\|_2,
\end{equation}
and $\|\nabla_{\mathcal M}\ell(\mathcal M,\mathcal M')\|_2\le G_M$.
By the chain rule, the composite smoothness constant satisfies $L_\theta\le B_J^2L_M$, and $\|\nabla_\theta \ell(\mathcal M,\mathcal M')\|_2\le B_JG_M$.
\end{assumption}

\begin{assumption}[Bounded features and Jacobian]\label{assump:boundedM}
There exist constants $R_M,B_J>0$ such that, for any $\theta,x$, satisfies $\|f(x;\theta)\|_2\le R_M$, $\|J_f(x;\theta)\|_2\le B_J$.

\end{assumption}

\begin{assumption}[Unbiased stochastic gradients with bounded variance]\label{assump:sgd}
The local SGD gradient estimate $g_{t,e}$ satisfies
\begin{equation}
\mathbb E\big[g_{t,e}\big]=\nabla_\theta\,\mathbb E_{x\sim\mathcal D_n}\!\big[\ell(\mathcal M^{t,e},\mathcal M'^{\,t-1})\big],
\end{equation}
and $\mathbb E\big\|g_{t,e}-\nabla_\theta \mathbb E_x[\ell(\mathcal M^{t,e},\mathcal M'^{\,t-1})]\big\|_2^2\ \le\ \sigma^2.$
\end{assumption}

\begin{assumption}[Lipschitz Continuity in $\mathcal M'$]\label{assump:agg}
Client-side reduction and server aggregation are convex combinations. Together with Assumption~\ref{assump:boundedM}, this implies
$\|\mathcal M'^{\,t}\|_2\le R_M$, hence
$\mathrm{dis}(m_{(h,w)},m_k')\le 2R_M$ and $0\le \ell(\cdot,\cdot)\le 2R_M$.
Moreover, the map $\ell(\mathcal M,\cdot)$ is $1$-Lipschitz in $\mathcal M'$:
\begin{equation}
\big|\ell(\mathcal M,\mathcal M')-\ell(\mathcal M,\tilde{\mathcal M}')\big|
\ \le\ \|\mathcal M'-\tilde{\mathcal M}'\|_2.
\end{equation}
This follows from the bounded convex hull, max-hinge, and the triangle inequality.
\end{assumption}

\begin{theorem}[Round Loss Reduction]\label{thm:loss-reduction}
Within a fixed communication round $t$, under Assumptions~\ref{assump:smoothM}--\ref{assump:agg},
for any client and local iterations $e=0,\ldots,E-1$, if $\eta\in(0,2/L_\theta)$, then
\begin{equation}\label{eq:per-round}
\begin{aligned}
\mathbb E\big[\mathcal L_{t,E}\big]
&\le \mathbb E\big[\mathcal L_{t,0}\big]
-\Big(\eta-\tfrac{\eta^2L_\theta}{2}\Big)\sum_{e=0}^{E-1}\mathbb E\!\left[\|\nabla_\theta \mathcal L_{t,e}\|_2^2\right] \\
&\quad+\frac{E\,\eta^2L_\theta}{2}\,\sigma^2,
\end{aligned}
\end{equation}
and $\forall e$, $\mathbb E[\mathcal L_{t,e}]\le 2R_M<\infty$.
\end{theorem}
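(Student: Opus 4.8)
The plan is to prove \eqref{eq:per-round} by a standard descent-lemma argument applied to the composite objective $\mathcal L_{t,e}$, treating $\mathcal M'^{\,t-1}$ as a fixed constant throughout the round (Assumption~\ref{assump:agg} guarantees this is legitimate, and Assumption~\ref{assump:smoothM} with the chain-rule bound $L_\theta\le B_J^2 L_M$ supplies the smoothness constant we need in $\theta$-space). First I would record that $\theta\mapsto\mathcal L_{t,e}$ is $L_\theta$-smooth: this follows because $\ell$ is $L_M$-smooth in $\mathcal M$, $f(x;\cdot)$ has Jacobian bounded by $B_J$ (Assumption~\ref{assump:boundedM}), and the gradient of $\mathcal M$ is $G_M$-bounded, so the composition $\theta\mapsto\ell(f(x;\theta),\mathcal M')$ is smooth with constant $L_\theta$ after taking expectation over $x$ (expectation preserves smoothness). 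Then the descent lemma gives, for one SGD step $\theta_{t,e+1}=\theta_{t,e}-\eta g_{t,e}$,
\begin{equation}
\mathcal L_{t,e+1}\le \mathcal L_{t,e}-\eta\langle\nabla_\theta\mathcal L_{t,e},g_{t,e}\rangle+\tfrac{\eta^2 L_\theta}{2}\|g_{t,e}\|_2^2.
\end{equation}

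Next I would take the conditional expectation given $\theta_{t,e}$ and invoke Assumption~\ref{assump:sgd}: unbiasedness collapses $\mathbb E\langle\nabla_\theta\mathcal L_{t,e},g_{t,e}\rangle=\|\nabla_\theta\mathcal L_{t,e}\|_2^2$, and the bias–variance split $\mathbb E\|g_{t,e}\|_2^2=\|\nabla_\theta\mathcal L_{t,e}\|_2^2+\mathbb E\|g_{t,e}-\nabla_\theta\mathcal L_{t,e}\|_2^2\le\|\nabla_\theta\mathcal L_{t,e}\|_2^2+\sigma^2$ bounds the quadratic term. Combining yields the one-step inequality
\begin{equation}
\mathbb E[\mathcal L_{t,e+1}]\le\mathbb E[\mathcal L_{t,e}]-\Big(\eta-\tfrac{\eta^2 L_\theta}{2}\Big)\mathbb E\|\nabla_\theta\mathcal L_{t,e}\|_2^2+\tfrac{\eta^2 L_\theta}{2}\sigma^2,
\end{equation}
where the coefficient $\eta-\eta^2 L_\theta/2$ is positive precisely because $\eta\in(0,2/L_\theta)$. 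Telescoping this over $e=0,\ldots,E-1$ — the $\mathbb E[\mathcal L_{t,e}]$ terms cancel in pairs and the $\sigma^2$ term accumulates $E$ times — produces exactly \eqref{eq:per-round}. The boundedness claim $\mathbb E[\mathcal L_{t,e}]\le 2R_M$ is immediate from Assumption~\ref{assump:agg}, which states $0\le\ell(\cdot,\cdot)\le 2R_M$ pointwise, so its expectation is bounded by the same constant; this also ensures every quantity in the telescoping is finite and the manipulation is valid.

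The main obstacle, and the place I would be most careful, is justifying the $L_\theta$-smoothness of the composite map rigorously in the presence of the non-differentiable set coming from the $\max(0,\cdot)$ hinge and the KNN selection in $\ell$ (cf.\ \eqref{loss}). Assumption~\ref{assump:smoothM} explicitly excludes a measure-zero set, so the clean way is to argue that the descent inequality holds along the SGD trajectory almost surely (the iterates avoid the bad set with probability one under any absolutely continuous gradient-noise model, or one works with a fixed locally-active selection of the hinge/KNN on each piece), and that the expectation over $x\sim\mathcal D_n$ smooths out any residual pointwise irregularity. A secondary subtlety is that the KNN index set in $\ell$ depends on $\mathcal M$, so strictly speaking $\ell$ is only piecewise smooth in $\mathcal M$; but on each active piece the smoothness constant is still $L_M$ (the hinge is $1$-Lipschitz with gradient $0$ or $\pm$ the distance gradient, and the Euclidean distance squared is smooth away from coincidence), and the pieces are separated by a measure-zero set, so the descent-lemma bound survives. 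Everything else — the chain-rule constant, the bias–variance decomposition, the telescoping, the stepsize condition — is routine once this measure-zero bookkeeping is handled.
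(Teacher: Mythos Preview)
Your proposal is correct and follows essentially the same route as the paper: apply the $L_\theta$-smooth descent lemma to one SGD step, use unbiasedness and the bias--variance split from Assumption~\ref{assump:sgd} to get the one-step inequality, telescope over $e=0,\ldots,E-1$, and read off the boundedness $\mathbb E[\mathcal L_{t,e}]\le 2R_M$ directly from the pointwise loss bound in Assumption~\ref{assump:agg}. Your additional discussion of the measure-zero non-differentiability from the hinge and KNN selection is more careful than the paper, which simply invokes Assumption~\ref{assump:smoothM} without further comment.
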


\begin{proof}
Assumption~\ref{assump:agg} implies $\mathrm{dis}(m,m')\le 2R_M$, hence
$\max(0,\mathrm{dis}-th)\le \max(0,2R_M-th)\le 2R_M$,
as $th > 0$. Therefore $\mathbb E[\mathcal L_{t,e}]\le 2R_M$.

Let the update be $\theta_{t,e+1}=\theta_{t,e}-\eta\,g_{t,e}$.
By the Assumption~\ref{assump:smoothM}, apply the one-step smooth upper bound to
$\theta\mapsto \mathbb E[\mathcal L_{t,e}]$:
\[
\begin{aligned}
\mathbb E[\mathcal L_{t,e+1}]
&\le \mathbb E[\mathcal L_{t,e}]
+\langle \mathbb \nabla_\theta E[\mathcal L_{t,e}],\theta_{t,e+1}-\theta_{t,e}\rangle\\
&\quad+\frac{L_\theta}{2}\|\theta_{t,e+1}-\theta_{t,e}\|_2^2 \\
&\le \mathbb E[\mathcal L_{t,e}]
-\eta\,\mathbb E\!\left[\langle \nabla_\theta \mathcal L_{t,e},\,g_{t,e}\rangle\right]
+\frac{\eta^2L_\theta}{2}\,\mathbb E\|g_{t,e}\|_2^2.
\end{aligned}
\]

By Assumption~\ref{assump:sgd},
$\mathbb E[g_{t,e}]=\mathbb E[\nabla_\theta \mathcal L_{t,e}]$ and
$\mathbb E\|g_{t,e}\|_2^2=\mathbb E\|\nabla_\theta \mathcal L_{t,e}\|_2^2+\sigma^2$; substituting yields
\[
\mathbb E[\mathcal L_{t,e+1}]
\le \mathbb E[\mathcal L_{t,e}]
-\Big(\eta-\tfrac{\eta^2L_\theta}{2}\Big)\mathbb E\|\nabla_\theta \mathcal L_{t,e}\|_2^2
+\frac{\eta^2L_\theta}{2}\sigma^2.
\]
Summing over $e=0,\ldots,E-1$ gives \eqref{eq:per-round}.
\end{proof}

\begin{theorem}[FedDyMem Convergence]\label{thm:convergence}
Under Assumptions~\ref{assump:smoothM}--\ref{assump:agg} with fixed local steps $E\ge 1$, as $T\to\infty$, we have
\begin{equation}
\frac{1}{TE}\sum_{t=1}^T\sum_{e=0}^{E-1}\mathbb E\Big[\|\nabla_\theta \mathcal L_{t,e}\|_2^2\Big]\ \longrightarrow\ 0.
\end{equation}
\end{theorem}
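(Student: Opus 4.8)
The plan is to promote the single-round descent estimate of Theorem~\ref{thm:loss-reduction} to a statement about the time-averaged squared gradient by summing it over communication rounds. Fix $\eta\in(0,2/L_\theta)$, put $c_\eta:=\eta-\tfrac{\eta^2 L_\theta}{2}>0$ and $S_t:=\sum_{e=0}^{E-1}\mathbb E[\|\nabla_\theta\mathcal L_{t,e}\|_2^2]$. Rearranging \eqref{eq:per-round} gives, for each round $t$,
\[
c_\eta\,S_t\ \le\ \mathbb E[\mathcal L_{t,0}]-\mathbb E[\mathcal L_{t,E}]+\tfrac{E\eta^2 L_\theta}{2}\,\sigma^2 .
\]
Summing over $t=1,\dots,T$ reduces the problem to bounding $\sum_{t=1}^{T}(\mathbb E[\mathcal L_{t,0}]-\mathbb E[\mathcal L_{t,E}])$. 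This sum does not collapse directly, because the local objective changes between rounds: $\mathcal L_{t,E}$ is measured against the center memory $\mathcal M'^{\,t-1}$ distributed for round $t$, whereas $\mathcal L_{t+1,0}$ is measured against the freshly aggregated $\mathcal M'^{\,t}$, even though the network weights carry over unchanged, $\theta_{t+1,0}=\theta_{t,E}$ (FedDyMem never aggregates parameters).

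The second step is to bridge consecutive rounds. Since $\mathcal M^{t+1,0}=f(x;\theta_{t+1,0})=f(x;\theta_{t,E})=\mathcal M^{t,E}$, the $1$-Lipschitz dependence of $\ell(\mathcal M,\cdot)$ on its second argument (Assumption~\ref{assump:agg}) yields
\[
\mathbb E[\mathcal L_{t+1,0}]-\mathbb E[\mathcal L_{t,E}]\ \le\ \mathbb E\|\mathcal M'^{\,t}-\mathcal M'^{\,t-1}\|_2\ =:\ \delta_t .
\]
I would then bound the center-memory drift $\delta_t$ from the structure of Memory-Reduce: the rEMA coefficient in \eqref{eq16} is $\alpha_t=1/(t+1)$, and both $\bar{\mathcal M}^{n,t}_{reduce}$ (a convex combination of memory features via \eqref{eq14}) and $\mathcal M'^{\,t-1}$ have norm at most $R_M$ by Assumption~\ref{assump:boundedM}, so $\|\mathcal M^{n,t}_{reduce}-\mathcal M'^{\,t-1}\|_2\le 2R_M\alpha_t$ on every client; since the server aggregation is itself a convex combination of these updated points (Assumption~\ref{assump:agg}), $\delta_t\le 2R_M\alpha_t=O(1/t)$, hence $\sum_{t=1}^{T-1}\delta_t=O(\log T)$. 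Telescoping the bridged differences and using the uniform bound $0\le\mathcal L_{t,e}\le 2R_M$ from Theorem~\ref{thm:loss-reduction} then gives
\[
\sum_{t=1}^{T}(\mathbb E[\mathcal L_{t,0}]-\mathbb E[\mathcal L_{t,E}])=\mathbb E[\mathcal L_{1,0}]-\mathbb E[\mathcal L_{T,E}]+\sum_{t=1}^{T-1}(\mathbb E[\mathcal L_{t+1,0}]-\mathbb E[\mathcal L_{t,E}])\ \le\ 2R_M+O(\log T).
\]

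Combining the two displays and dividing by $TE$,
\[
\frac{1}{TE}\sum_{t=1}^{T}S_t\ \le\ \frac{2R_M+O(\log T)}{TE\,c_\eta}+\frac{\eta^2 L_\theta\,\sigma^2}{2\,c_\eta}.
\]
The first term vanishes as $T\to\infty$; the residual variance term is dealt with exactly as in standard nonconvex SGD analysis, either by taking full-batch local updates so that $\sigma^2=0$, or by letting the step size depend on the horizon, for instance $\eta=\eta_T=\Theta(1/\sqrt{T})$ (which stays in $(0,2/L_\theta)$ for large $T$ and makes $c_\eta=\Theta(1/\sqrt{T})$), so that the right-hand side becomes $O(\log T/\sqrt{T})\to 0$; in either case $\frac{1}{TE}\sum_{t=1}^{T}\sum_{e=0}^{E-1}\mathbb E[\|\nabla_\theta\mathcal L_{t,e}\|_2^2]\to 0$, which is the claim. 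I expect the genuinely delicate point to be the drift bound: a $K$-means centroid is not Lipschitz in its input points in general, so one cannot invoke continuity of the clustering map and must instead lean on the shrinking rEMA factor $\alpha_t$ together with the convex-combination wording of Assumption~\ref{assump:agg} (equivalently, a mild clustering-stability hypothesis) to make $\delta_t=O(1/t)$ rigorous; by comparison, the observation that the $\sigma^2$ term forces either $\sigma=0$ or a decaying $\eta_T$ is only a bookkeeping caveat.
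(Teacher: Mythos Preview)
Your proposal is correct and follows essentially the same route as the paper: sum the per-round descent inequality, telescope with the cross-round bridging term bounded via the $1$-Lipschitz dependence on $\mathcal M'$, control the drift $\delta_t$ by the rEMA factor $\alpha_t=1/(t+1)$ together with the convex-combination clause of Assumption~\ref{assump:agg} to get $\sum_t\delta_t=O(\log T)$, and eliminate the residual $\sigma^2$ term by a vanishing step-size schedule or growing batch. The paper invokes the Robbins--Monro conditions $\sum\eta=\infty$, $\sum\eta^2<\infty$ (or $\sigma^2\to0$) rather than your horizon-dependent $\eta_T=\Theta(1/\sqrt{T})$, and it also leans on Assumption~\ref{assump:agg} rather than a Lipschitz property of $K$-means for the drift bound, exactly as you anticipate.
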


\begin{proof}
Sum Theorem~\ref{thm:loss-reduction} over $t=1,\ldots,T$ to obtain
\begin{equation}\label{eq:sum-loss-reduction}
\begin{aligned}
&\Big(\eta-\tfrac{\eta^2L_\theta}{2}\Big)\sum_{t=1}^T\sum_{e=0}^{E-1}\mathbb E\|\nabla_\theta \mathcal L_{t,e}\|_2^2 \\
&\ \le\ \sum_{t=1}^T\big(\mathbb E[\mathcal L_{t,0}]-\mathbb E[\mathcal L_{t,E}]\big) 
+T\cdot\tfrac{E\,\eta^2L_\theta}{2}\,\sigma^2.
\end{aligned}
\end{equation}
Apply a decomposition to the first term on the right:
\begin{equation}
\begin{aligned}
\sum_{t=1}^T\big(\mathbb E[\mathcal L_{t,0}]-\mathbb E[\mathcal L_{t,E}]\big)
&=\mathbb E[\mathcal L_{1,0}]-\mathbb E[\mathcal L_{T,E}]\\
&+\sum_{t=1}^{T-1}\Big(\mathbb E[\mathcal L_{t+1,0}]-\mathbb E[\mathcal L_{t,E}]\Big),
\end{aligned}
\end{equation}
Let $\Delta_t:=\mathbb E\|\mathcal M'^{\,t}-\mathcal M'^{\,t-1}\|_2$. By the $1$-Lipschitz property in Assumption~\ref{assump:agg},
\begin{equation}
\mathbb E[\mathcal L_{t+1,0}]-\mathbb E[\mathcal L_{t,E}]
 \le\ \Delta_t.
\end{equation}
Using the lower bound $\mathbb E[\mathcal L_{T,E}]\ge \underline\ell$, substitute back into \eqref{eq:sum-loss-reduction} and divide both sides by $TE(\eta-\eta^2L_\theta/2)$ to get
\begin{equation}\label{eq:ergodic-bound}
\begin{aligned}
&\frac{1}{TE}\sum_{t=1}^T\sum_{e=0}^{E-1}
\mathbb E\!\left[\|\nabla_\theta \mathcal L_{t,e}\|_2^2\right]
\le \frac{\mathbb E[\mathcal L_{1,0}]-\underline\ell}{(\eta-\eta^2L_\theta/2)\,TE} \\
&+\frac{\eta L_\theta}{2(\eta-\eta^2L_\theta/2)}\,\sigma^2
+\frac{1}{(\eta-\eta^2L_\theta/2)\,TE}\sum_{t=1}^{T-1}\Delta_t.
\end{aligned}
\end{equation}

For the first term, with $\eta\in(0,2/L_\theta)$ and $E\ge 1$, the numerator is constant while the denominator grows linearly in $T$, hence it vanishes. For the second term, either use decreasing step sizes, leading to a noise factor proportional to $(\sum_{t,e}\eta_{t,e}^2)/(\sum_{t,e}\eta_{t,e})\to 0$ if $\sum\eta_{t,e}=\infty$ and $\sum\eta_{t,e}^2<\infty$, or fix $\eta$ and increase batch size so that $\sigma^2\to 0$; in both cases the term vanishes. For the third term, using~\eqref{eq16} and Assumption~\ref{assump:agg}, we have
\[
\Delta_t
=\alpha_t\,\mathbb E\|\widehat{\mathcal M}'^{\,t}-\mathcal M'^{\,t-1}\|
\ \le\ \frac{2R_M}{t+1}.
\]
Thus $\sum_{t=1}^{T-1}\Delta_t=O(\log T)$, and after normalization by $TE$ it tends to $0$.
Combining the three parts proves the claim.
\end{proof}

The above analysis shows that the expected loss-gradient converges to zero as iterations increase, thereby proving FedDyMem’s convergence.

\section{Experiments}
All experiments presented in this article were implemented and conducted using the PyTorch framework. Model training was performed on a high-performance computing system equipped with AMD EPYC 9554 64-core processors and eight NVIDIA GeForce RTX A6000 GPUs, each possessing 48 GB of memory, to ensure accelerated computation.
\subsection{Experimental Settings}
\noindent \textbf{Datasets.} In this article, we combined $6$ image anomaly detection datasets containing different types of data, covering various industrial inspection scenarios and medical imaging domains, to evaluate the performance of FedDyMem. These datasets are as follows:
\begin{itemize}
    \item \textbf{Mixed-Brain-AD(Brain):} We combined $3$ brain image diagnostic datasets, including Brain-Tumor (\cite{cai2024medianomaly}), BraTS2021 (\cite{baid2021rsna}) and HeadCT (\cite{salehi2021multiresolution}). These datasets, consisting of both MRI and CT images, were distributed across different clients in a heterogeneous manner to represent a mixture of various image types.
    \item \textbf{Mixed-ChestXray-AD(X-ray):} Three diagnostic chest X-ray datasets, including ChestXRay2017 (\cite{kermany2018identifying}), RSNA (\cite{wang2017chestx}), and VinDr-CXR (\cite{nguyen2022vindr}), were employed in this article. These datasets were derived from distinct medical institutions, ensuring a diverse and heterogeneous data distribution, which is essential for the federated UAD.
    \item \textbf{Mixed-PCB-AD(PCB):} We investigate $2$ datasets containing PCB anomalies, namely VisA (\cite{zou2022spot}) and VISION (\cite{bai2023vision}), and integrate them to create the Mixed-PCB dataset. The VisA dataset includes six different types of PCB products, while the VISION dataset encompasses two types. Consequently, the Mixed-PCB-AD dataset provides a total of $8$ distinct types of PCB product images, facilitating simulation in federated UAD.
    \item \textbf{MVTec (\cite{bergmann2019mvtec}):} In this article, we utilize $5$ texture classes(carpet, grid, leather, tile, and wood) from the MVTec dataset, which includes mask annotations, to simulate industrial federated anomaly detection scenarios.
    \item \textbf{MetalPartsAD(MPDD) (\cite{jezek2021deep}):} The MPDD comprises images of $6$ types of metal parts, each captured under varying conditions of spatial orientation, object positioning, and camera distance. These images are further characterized by diverse lighting intensities and non-homogeneous backgrounds, providing a comprehensive dataset for our federated UAD.
    \item \textbf{TextureAD-Wafer(Wafer) (\cite{lei2025adapted}):} TextureAD-Wafer comprises a dataset of 14 distinct wafer products, each imaged using an industrial-grade, high-resolution optical camera. This dataset also explores these wafer products under various optical conditions, thus providing a rich federated UAD simulate environment with diverse feature distributions. 
\end{itemize}

\begin{figure}[t]
\centering
\includegraphics[width=\linewidth]{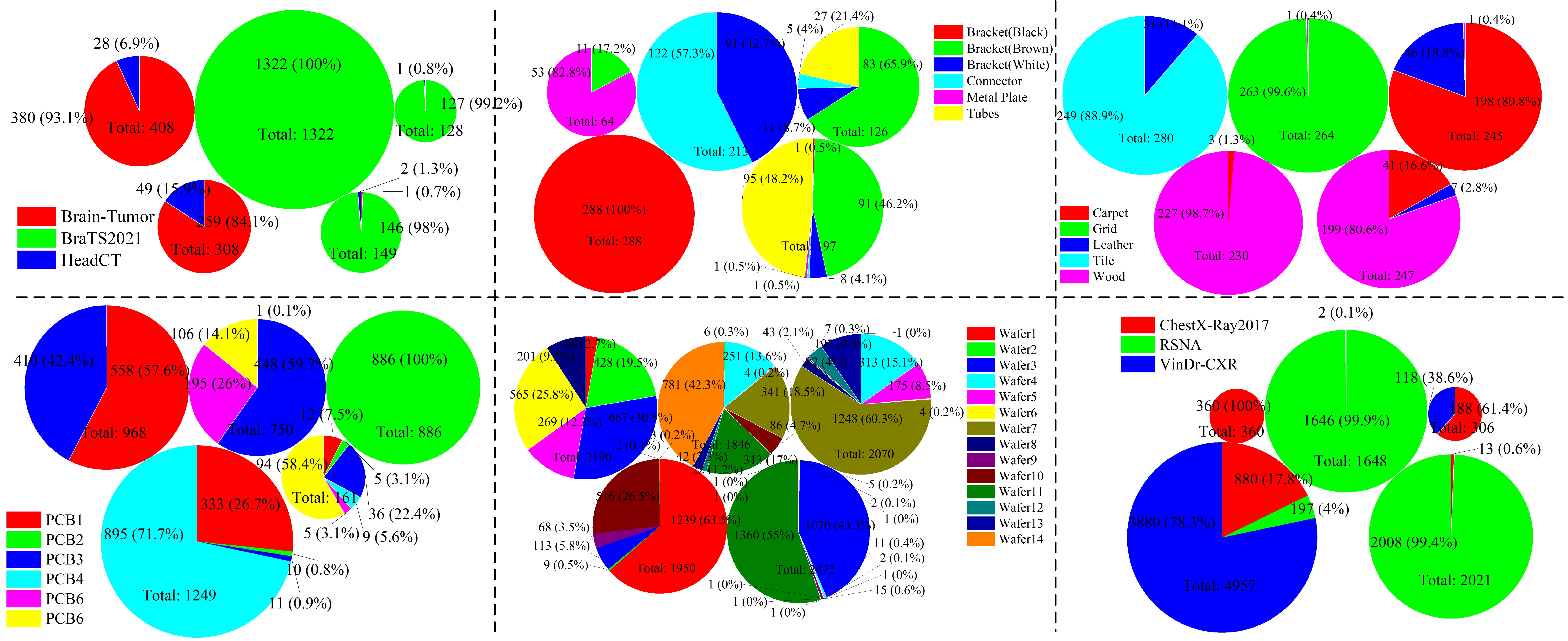} 
\caption{Summary of the dataset distribution on five clients. The relative sizes of the pie in the chart illustrate the total data volume across individual clients. Distinct colors within each pie represent different product types.}
\label{dataset}
\end{figure}

 \noindent \textbf{Scenes with Feature Distribution Bias.} In this article, we aim to simulate heterogeneous feature distributions across $5$ clients by using all samples from the same normal categories with different product types. We employ a Dirichlet distribution to allocate data, where each product type, though labeled as normal, is distributed to various clients. We set the Dirichlet parameter $\alpha=0.1$ to accentuate the variances in data distribution across clients, thus simulating more realistic and challenging multi-client scenarios typically encountered in federated UAD. Fig.~\ref{dataset} illustrates the results of this partitioning approach across all datasets.

\noindent \textbf{Implementation Details.} The frozen feature extractor used in our experiments is the Wide-ResNet-50 model pre-trained on the ImageNet dataset. For FedDyMem, the hyperparameters $H^{\bf G}$ and $W^{\cal G}$ are set to $8$, and $th$ is set to $0.01$.  All baseline models are constructed using the Wide-ResNet-50, with all parameters comprehensively aggregated. The parameter 
$K$ for the KNN algorithm is set to 3. For a fair comparison with other state-of-the-art methods, we ensure consistency in the training parameters used in our experiments. Specifically, the learning rate is set to $1 \times 10^{-3}$, with a total of 200 communication rounds. The local epoch is set to 1, and the batch size is 10. We utilize the Adam optimizer with a weight decay of $5 \times 10^{-4}$ and a momentum of $1$. All images in the dataset were preprocessed by cropping and resizing to a resolution of $224\times 224$ pixels using the bicubic interpolation method.

\noindent \textbf{Evaluation Metric.} For the evaluation of anomaly detection, we adopt the Area Under the Receiver Operating Characteristic Curve (AUROC), ensuring a comprehensive assessment of our proposed model in alignment with prior studies. Image-level anomaly detection performance is quantified using the standard AUROC metric, denoted as Image-level AUROC(I-AUROC). Additionally, to assess anomaly localization, we compute the Pixel-level AUROC(P-AUROC). To further enhance the evaluation of anomaly localization capabilities, we calculate the Per-Region Overlap (PRO) at the pixel level, providing a more granular and comprehensive analysis. These metrics are widely used in previous work (\cite{roth2022towards,you2022unified,liang2023omni,cai2024medianomaly}).

\begin{figure*}[th]
\centering
\includegraphics[width=\linewidth]{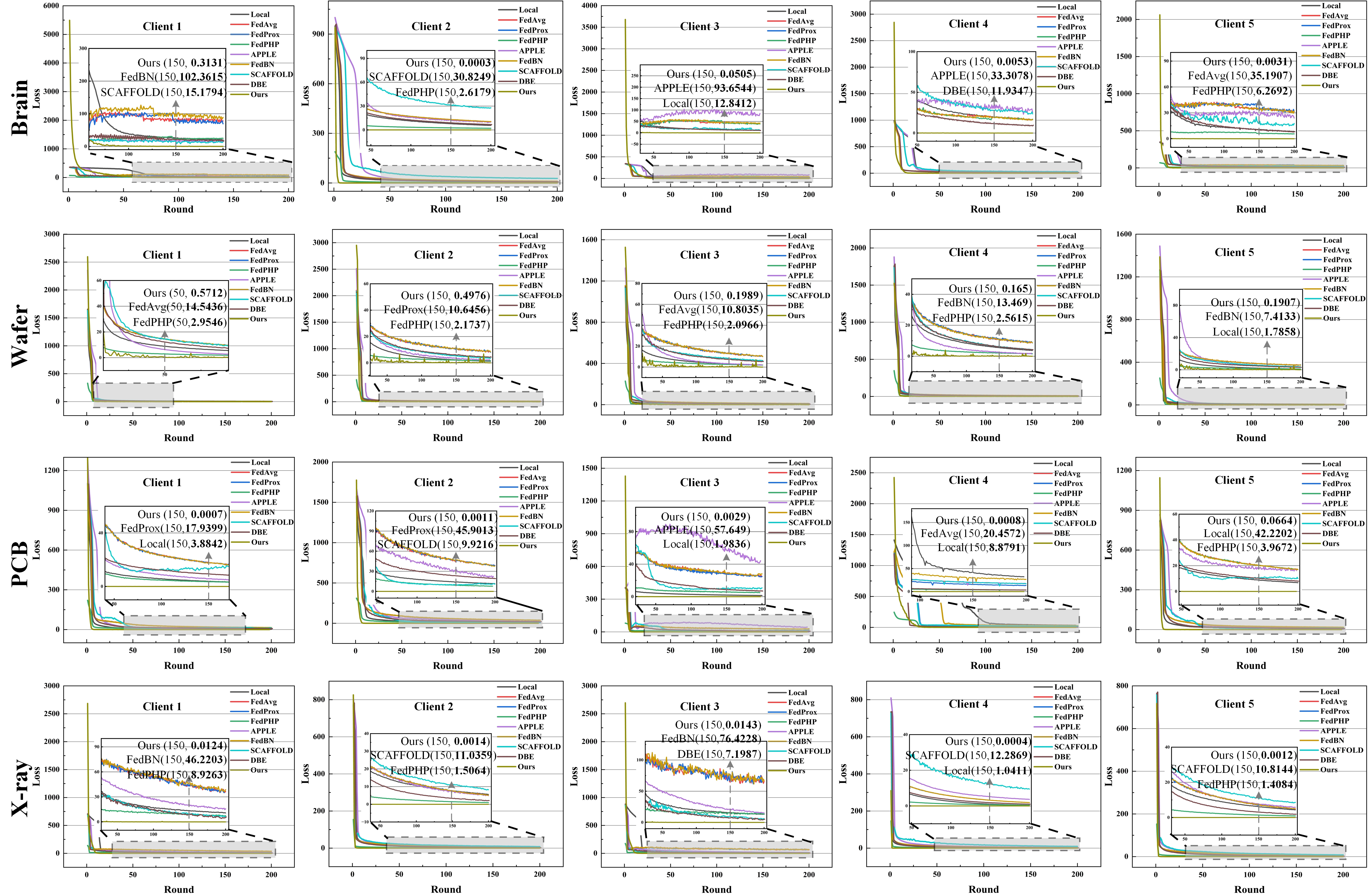}
\caption{Illustration of the loss curves across five clients evaluated on four datasets. To better highlight the effectiveness of FedDyMem, zoom-in views of the later communication rounds are included in each subfigure. In the zoom-in, we annotate three key values: (1) the loss of FedDyMem (Ours) as the first row, (2) the maximum loss among all compared methods in the second row, and (3) the minimum loss excluding our method in the third row.}
\label{exp-loss}
\end{figure*}

\begin{figure*}[t]
\centering
\includegraphics[width=\linewidth]{loss-acc-tsne.jpg}
\caption{Comparison of feature distributions on the MVTec dataset after 200 communication rounds. (a) t-SNE (\cite{maaten2008visualizing}) visualization of local and global memory banks generated by five clients using FedAvg (top row) and FedDyMem (bottom row). (b) Comparison of global memory banks aggregated from FedDyMem and FedAvg clients, together with the corresponding I-AUROC results.}
\label{loss-tsne}
\end{figure*}

\subsection{Convergence Curves and Distribution Consistency}
In this section, we present a detailed analysis of the training loss curves for FedDyMem in comparison with baseline models across four datasets including Brain, Wafer, PCB, and X-ray. The loss curves for all $5$ clients are illustrated in Fig.~\ref{exp-loss} over $200$ communication rounds, providing insights into the performance and convergence behavior of the models. First, FedDyMem demonstrates a significantly faster convergence rate across all datasets. Second, FedDyMem exhibits smoother convergence compared to the baseline models. As shown in the zoomed-in view of the later communication rounds in Fig.~\ref{exp-loss}, the loss curves of the baseline models have less stable optimization dynamics. In contrast, FedDyMem maintains consistent and stable convergence. Moreover, FedDyMem consistently achieves lower final loss values across all datasets. The experimental results substantiate the convergence analysis presented in Section~\ref{convergence}, further validating the efficiency of FedDyMem in federated learning settings.

To further analyze the discrepancy between the loss values in Fig.~\ref{exp-loss} and the AUROC improvements reported in Table~\ref{tab:comparison}, we visualize in Fig.~\ref{loss-tsne} the t-SNE of the memory feature distributions learned by FedAvg and FedDyMem on the MVTec dataset after 200 communication rounds.  Specifically, we compare the local memory banks of five clients against their corresponding global memory banks in Fig.~\ref{loss-tsne}(a). For FedAvg, the memory features generated by individual clients exhibit clear distributional deviations from the global memory, leading to local losses that remain several orders of magnitude higher (up to $10^4$) than those of FedDyMem.  This mismatch indicates that although the optimization objective in FedAvg continues to decrease, the learned memory features fail to align with the global feature distribution due to local feature bias and inter-client distribution shifts. Furthermore, Fig.~\ref{loss-tsne}(b) illustrates that FedDyMem achieves a much more compact and coherent global feature space across clients. The memory features aggregated via dynamic memory update and k-means clustering form a well-aligned global representation that mitigates the impact of intra-class distribution bias. This alignment is critical for UAD, where the effectiveness of nearest-neighbor-based scoring relies heavily on the quality and consistency of the global memory. Therefore, the superior AUROC performance of FedDyMem benefits from its ability to construct globally consistent and discriminative feature spaces across heterogeneous clients.

\begin{figure}[t]
\centering
\includegraphics[width=\linewidth]{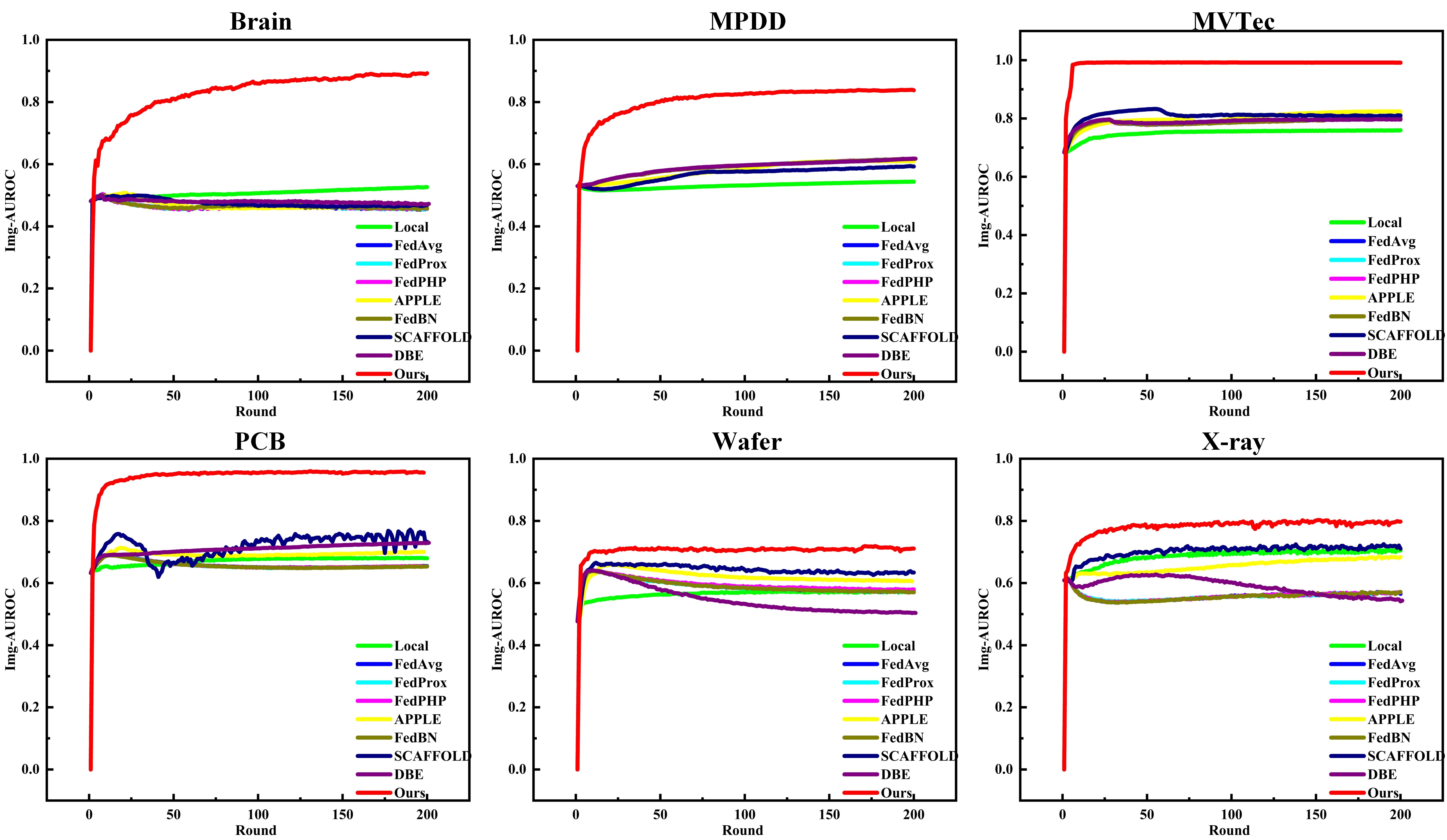} 
\caption{Illustration of I-AUROC test accuracy comparing FedDyMem and baseline models.}
\label{exp-acc}
\end{figure}

\begin{table*}
\centering
\footnotesize
\caption{Comparison of Image-AUROC (\%) results on six Dataset}
\label{tab:comparison}
\begin{tabular}{l| c| c| c| c| c| c| c}
\toprule
{Methods} & {Communication} & 
{Wafer} & {PCB} & {MVTec} & {MPDD} & {Brain} & {X-ray} \\
\midrule

Local       
& -
& \makecell[c]{57.35 \\ ($\pm$0.71) }
& \makecell[c]{51.89 \\ ($\pm$0.76) }
& \makecell[c]{75.94 \\ ($\pm$0.39)} 
& \makecell[c]{54.39 \\ ($\pm$0.77)}
& \makecell[c]{50.66 \\ ($\pm$0.87)}
& \makecell[c]{61.02 \\ ($\pm$0.58)} \\ \midrule

FedAvg   
& \multirow{7}{*}{
    \makecell{
        Model Parameters: \\ 
        $\approx$ 10.66M, \\
        (SCAFFOLD) \\ $\approx$ 10.66M * 2}
}
& \makecell[c]{63.91 \\ ($\pm$0.36)}
& \makecell[c]{54.18 \\ ($\pm$0.33)}
& \makecell[c]{80.32 \\ ($\pm$0.81)}
& \makecell[c]{61.87 \\ ($\pm$0.76)}
& \makecell[c]{50.16 \\ ($\pm$0.63)}
& \makecell[c]{61.48 \\ ($\pm$0.56)} \\

FedProx  
& 
& \makecell[c]{63.98 \\ ($\pm$0.51)}
& \makecell[c]{54.11 \\ ($\pm$0.48)}
& \makecell[c]{80.22 \\ ($\pm$0.78)}
& \makecell[c]{61.83 \\ ($\pm$0.61)}
& \makecell[c]{50.47 \\ ($\pm$0.34)}
& \makecell[c]{61.45 \\ ($\pm$0.66)} \\

FedPHP     
& 
& \makecell[c]{63.94 \\ ($\pm$0.56)}
& \makecell[c]{54.13 \\ ($\pm$0.36)}
& \makecell[c]{80.19 \\ ($\pm$0.62)}
& \makecell[c]{61.80 \\ ($\pm$0.63)}
& \makecell[c]{50.38 \\ ($\pm$0.39)}
& \makecell[c]{61.45 \\ ($\pm$0.37)} \\

APPLE       
& 
& \makecell[c]{65.92 \\ ($\pm$0.33)}
& \makecell[c]{56.79 \\ ($\pm$0.65)}
& \makecell[c]{82.43 \\ ($\pm$0.51)}
& \makecell[c]{61.03 \\ ($\pm$0.77)}
& \makecell[c]{50.82 \\ ($\pm$0.45)}
& \makecell[c]{68.47 \\ ($\pm$0.51)} \\

FedBN       
& 
& \makecell[c]{63.89 \\ ($\pm$0.35)}
& \makecell[c]{54.09 \\ ($\pm$0.79)}
& \makecell[c]{80.13 \\ ($\pm$0.77)}
& \makecell[c]{61.90 \\ ($\pm$0.61)}
& \makecell[c]{50.04 \\ ($\pm$0.54)}
& \makecell[c]{61.45 \\ ($\pm$0.72)} \\

SCAFFOLD    
& 
& \makecell[c]{66.53 \\ ($\pm$0.85)}
& \makecell[c]{63.56 \\ ($\pm$0.30)}
& \makecell[c]{83.26 \\ ($\pm$0.37)}
& \makecell[c]{59.41 \\ ($\pm$0.80)}
& \makecell[c]{49.90 \\ ($\pm$0.52)}
& \makecell[c]{72.52 \\ ($\pm$0.52)} \\

DBE         
& 
& \makecell[c]{64.12 \\ ($\pm$0.49)}
& \makecell[c]{53.78 \\ ($\pm$0.38)}
& \makecell[c]{79.78 \\ ($\pm$0.90)}
& \makecell[c]{61.78 \\ ($\pm$0.51)}
& \makecell[c]{50.06 \\ ($\pm$0.85)}
& \makecell[c]{62.77 \\ ($\pm$0.61)} \\

\midrule
\textbf{Ours}       
& \makecell{
        Memory Size: $\approx$ \\
         \textbf{5.62M}
}
& \makecell[c]{\textbf{71.96} \\ \textbf{($\pm$0.32)}}
& \makecell[c]{\textbf{83.34} \\ \textbf{($\pm$0.45)}}
& \makecell[c]{\textbf{99.24} \\ \textbf{($\pm$0.36)}}
& \makecell[c]{\textbf{83.93} \\ \textbf{($\pm$0.77)}}
& \makecell[c]{\textbf{89.27} \\ \textbf{($\pm$0.45)}}
& \makecell[c]{\textbf{80.30} \\ \textbf{($\pm$0.36)}}\\

\bottomrule
\end{tabular}
\end{table*}

\subsection{Performance Comparisons with Existing Methods}
To demonstrate the superior performance of FedDyMem, we conducted extensive comparative analyses with $8$ baseline models. These include the basic federated learning algorithm FedAvg (\cite{fedavg}), the regularization-based FedProx (\cite{fedprox}),  the personalized-based FedPHP (\cite{li2021fedphp}) and APPLE (\cite{luo2022adapt}). Additionally, we compared against domain-skew-oriented methods such as FedBN (\cite{li2021fedbn}) and SCAFFOLD (\cite{karimireddy2020scaffold}), as well as the model-splitting method DBE (\cite{feddbe}) and a purely local training scheme (Local).

\begin{table*}
\centering
\footnotesize
\caption{Comparison of Pixel-AUROC (\%) results on four Dataset}
\label{tab:compare_loc_pauroc}
\begin{tabular}{l|c|c|c|c}
\toprule
Methods
& Wafer
& PCB
& MVTec
& MPDD \\
\midrule
{Local}     
& 58.27($\pm$1.20) 
& 68.12($\pm$1.11) 
& 72.63($\pm$0.98) 
& 87.02($\pm$1.05) \\

{FedAvg}    
& 66.56($\pm$0.70) 
& 69.30($\pm$0.34) 
& 76.16($\pm$0.73) 
& 91.67($\pm$0.94) \\

{FedProx}   
& 66.41($\pm$0.74) 
& 69.10($\pm$1.28) 
& 76.08($\pm$1.28) 
& 91.59($\pm$1.02) \\

{FedPHP}    
& 66.44($\pm$1.38) 
& 69.11($\pm$0.42) 
& 76.03($\pm$0.57) 
& 91.56($\pm$0.73) \\

{APPLE}     
& 69.88($\pm$0.95) 
& 71.34($\pm$0.85) 
& 78.06($\pm$1.08) 
& 92.12($\pm$1.19) \\

{FedBN}     
& 66.32($\pm$0.57) 
& 68.91($\pm$0.82) 
& 75.92($\pm$1.34) 
& 91.59($\pm$0.38) \\

{SCAFFOLD}  
& 68.75($\pm$1.19) 
& 77.18($\pm$0.54) 
& 80.81($\pm$0.35) 
& 93.47($\pm$0.74) \\

{DBE}       
& 66.95($\pm$1.07) 
& 72.95($\pm$1.35) 
& 76.18($\pm$0.89) 
& 92.66($\pm$1.36) \\ 
\midrule
\textbf{Ours}    
& \textbf{78.40($\pm$1.31)} 
& \textbf{96.03($\pm$0.79)} 
& \textbf{97.04($\pm$0.56)} 
& \textbf{98.97($\pm$1.39)} \\
\bottomrule
\end{tabular}
\end{table*}

\begin{table*}
\centering
\footnotesize
\caption{Comparison of PRO (\%) results on four Dataset}
\label{tab:compare_loc_pro}
\begin{tabular}{l|c|c|c|c}
\toprule
Methods
& \multicolumn{1}{c|}{{Wafer}} 
& \multicolumn{1}{c|}{{PCB}} 
& \multicolumn{1}{c|}{{MVTec}} 
& \multicolumn{1}{c}{{MPDD}} \\
\midrule
{Local}     
& 47.76($\pm$0.52) 
& 45.86($\pm$1.31) 
& 56.97($\pm$0.48) 
& 51.31($\pm$0.73) \\

{FedAvg}    
& 67.00($\pm$0.30) 
& 54.04($\pm$0.57) 
& 53.35($\pm$1.49) 
& 55.07($\pm$1.25) \\

{FedProx}   
& 67.25($\pm$0.33) 
& 54.04($\pm$1.28) 
& 53.62($\pm$0.70) 
& 54.96($\pm$1.16) \\

{FedPHP}    
& 67.16($\pm$0.38) 
& 54.23($\pm$1.16) 
& 53.37($\pm$0.47) 
& 54.86($\pm$1.42) \\

{APPLE}     
& 69.02($\pm$0.42) 
& 60.11($\pm$1.45) 
& 54.14($\pm$1.16) 
& 55.84($\pm$0.76) \\

{FedBN}     
& 67.16($\pm$0.68) 
& 54.21($\pm$1.07) 
& 53.48($\pm$0.79) 
& 54.92($\pm$0.82) \\

{SCAFFOLD}  
& 68.74($\pm$1.03) 
& 66.43($\pm$0.58) 
& 55.79($\pm$1.30) 
& 58.71($\pm$0.84) \\

{DBE}       
& 68.46($\pm$1.05) 
& 57.89($\pm$0.75) 
& 53.86($\pm$0.58) 
& 55.76($\pm$0.95) \\ 
\midrule
\textbf{Ours}    
& \textbf{74.91($\pm$0.54)} 
& \textbf{81.09($\pm$1.31)} 
& \textbf{86.81($\pm$1.13)} 
& \textbf{79.43($\pm$1.15)} \\
\bottomrule
\end{tabular}
\end{table*}

\noindent \textbf{Training Process.} Fig.~\ref{exp-acc} illustrates the I-AUROC performance of the FedDyMem training process in comparison with baseline models on all datasets. FedDyMem achieves superior testing accuracy compared to other baseline models, while also exhibiting a more stable training process. Specifically, the baseline models exchange parameters exclusively during communication rounds, introducing a distribution bias for memory features. This limitation prevents the models from achieving suboptimal solutions. 

\noindent \textbf{Communication Cost.} In federated learning, communication efficiency is crucial for practical deployment, especially in bandwidth-constrained or latency-sensitive environments. As shown in Table~\ref{tab:comparison}, most existing methods (e.g., FedAvg, FedProx, APPLE) require extensive communication of full model parameters, which amount to approximately 10.66 MB per communication round. In contrast, our approach only requires the exchange of a memory bank of approximately 5.62MB during each communication round. This substantially reduces communication overhead, enabling faster convergence in settings with limited communication resources. 

\noindent \textbf{Anomaly Detection.} As shown in Table~\ref{tab:comparison}, the performance of FedDyMem is evaluated across six datasets and compared with a set of baseline models. Conventional federated learning methods such as FedAvg, FedProx, and FedPHP show similar performance, typically outperforming the Local, but remaining below 65\% AUROC on most datasets (e.g., 64\% on Wafer). APPLE and FedBN show improvements over FedAvg-like methods (e.g., 65.92\% AUROC on Wafer for APPLE), but their effectiveness decreases on more challenging datasets such as Brain, where AUROC remains below 51\%. SCAFFOLD achieves relatively better results (e.g., 72.52\% on X-ray), but provides limited gains on PCB, Brain, and MPDD. Similarly, DBE shows incremental improvements across all benchmarks but fails to match our method. In contrast, our proposed approach consistently achieves the highest I-AUROC scores across all six anomaly detection benchmarks. Specifically, our method achieves 71.96\% for Wafer, 83.34\% for PCB, 99.24\% for MVTec, 83.93\% for MPDD, 89.27\% for Brain, and 80.30\% for X-ray. 

\noindent \textbf{Anomaly Localization.} As shown in Table~\ref{tab:compare_loc_pauroc} and Table~\ref{tab:compare_loc_pro}, we evaluate the abnormal localization performance on four datasets (Wafer, PCB, MVtec, and MPDD), as the Brain and X-ray do not provide region-level labels. To measure both detection capability and localization accuracy, each method is evaluated using two metrics, P-AUROC and PRO. The baseline method Local exhibits the lowest overall performance, with approximately 58.27\% in P-AUROC and 47.76\% in PRO on the Wafer dataset, and similar results across the other datasets. This proves that there is distribution bias in the raw dataset. For the federated learning approaches, FedAvg, FedProx, and FedPHP demonstrate closely aligned performance improvements. APPLE delivers further performance gains, achieving a P-AUROC of up to 69.88\% on the Wafer dataset. In contrast, our proposed method achieves superior performance across all evaluated datasets and metrics. These consistent and robust improvements highlight the effectiveness and adaptability of our approach in tackling federated anomaly localization tasks across diverse application scenarios.

\begin{table*}
\caption{Ablation study of structure for FedDyMem.}
\label{tab:alation}
\centering
\setlength{\tabcolsep}{1.3mm}{
\footnotesize
\begin{tabular}{ccc|cccccc|cccc}
\toprule
\multirow{2}{*}{+Proj}  & \multirow{2}{*}{+MG} & \multirow{2}{*}{+KM}  
& \multicolumn{6}{c|}{I-AUROC (\%)}  
& \multicolumn{4}{c}{P-AUROC (\%)} \\ 
\cmidrule{4-13} 
& & 
& Wafer & PCB & MVTec 
& MPDD & Brain & X-ray
& Wafer & PCB & MVTec 
& MPDD \\ 
\midrule
- & - & - 
& 67.69 & 69.58 & 85.86 
& 70.41 & 69.03 & 68.92
& 70.23 & 83.23 & 85.22
& 94.19\\ 
\checkmark & - & - 
& 68.93  & 74.10  & 87.27  
& 78.28  & 74.95  & 72.71
& 74.29  & 89.26  & 90.84
& 95.77\\ 

\checkmark & \checkmark & - 
& 70.96  & 81.26  & 92.76 
& 81.05  & 85.94  & 78.12 
& 77.01  & 94.97  & 96.10
& 97.96 \\ 

\checkmark & \checkmark & \checkmark 
& \textbf{71.96}  & \textbf{83.34}  & \textbf{99.24}  
& \textbf{83.93}  & \textbf{89.27}  & \textbf{80.30}
& \textbf{78.40}  & \textbf{96.03}  & \textbf{97.04}
& \textbf{98.97}\\ 
\bottomrule
\end{tabular}
}
\end{table*}

\begin{figure}[t]
\centering
\includegraphics[width=0.9\linewidth]{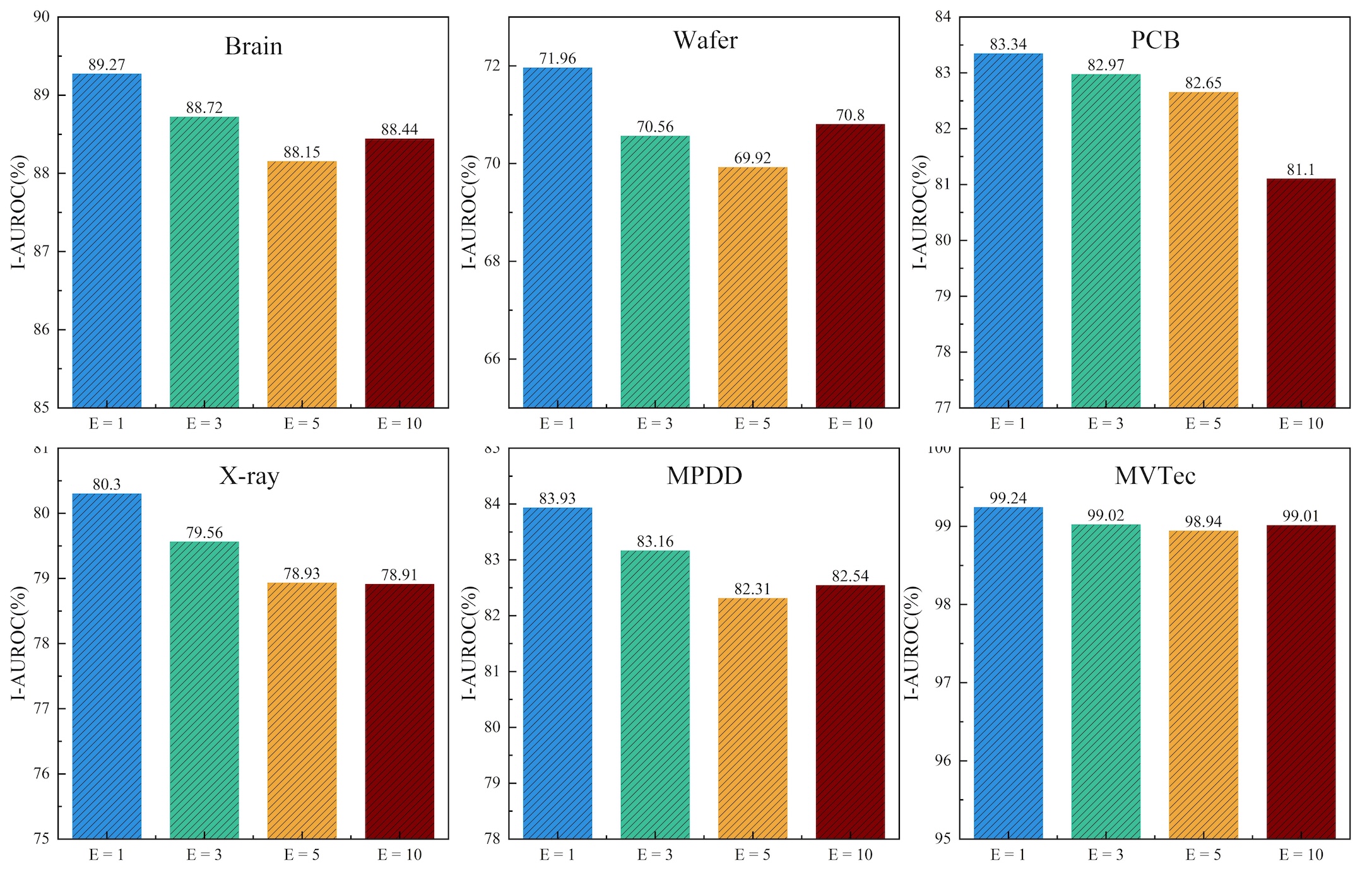} 
\caption{Impact of different local epochs on clients.}
\label{ablation-epochs}
\end{figure}

\subsection{Ablation Study}

\noindent \textbf{Impact of Local Training Epochs.} Fig.~\ref{ablation-epochs} illustrates the impact of varying the number of local training epochs ($E$) on the performance of federated learning across all datasets. A clear trend emerges where increasing $E$ generally results in a performance decline. For example, in the Wafer dataset, accuracy drops from 71.96\% at $E=1$ to 69.92\% at $E=5$. Similarly, the PCB dataset demonstrates a decrease in performance, with accuracy falling from 83.34\% at $E=1$ to 82.65\% at $E=5$. MPDD, Brain, and X-ray also exhibit gradual performance degradation with higher $E$ values. This phenomenon can be attributed to the heterogeneous data distribution across clients. As the number of local training epochs increases, the local models tend to overfit their respective client datasets, leading to a decrease in the effectiveness of the globally aggregated memory features.

\noindent \textbf{Effect of Projection Layer.} The ablation study presented in Table~\ref{tab:alation} shows the significant impact of incorporating a projection layer (Proj) on the performance of the proposed FedDyMem framework. The baseline configuration used for comparison consists of memory-reduction and memory feature average aggregation mechanisms, without additional architectural enhancements. Specifically, introducing the projection layer improves both I-AUROC and P-AUROC metrics across all datasets. For example, the I-AUROC for Wafer increases from 67.69\% to 68.93\%, and the P-AUROC improves from 70.23\% to 74.29\%. These results demonstrate that the projection layer effectively enhances feature representations.

\noindent \textbf{Effect of Memory Generator.} The integration of the memory generator (MG) significantly improves the anomaly detection performance of FedDyMem across different datasets. For example, the memory generator increases the I-AUROC from 68.93\% to 70.96\% on the wafer dataset and from 74.10\% to 81.26\% on the PCB dataset. A similar trend is observed for P-AUROC, with increases from 74.29\% to 77.01\% on Wafer and from 89.26\% to 94.97\% on PCB. These improvements demonstrate the ability of the memory generator to effectively refine feature representations. 

\noindent \textbf{Effect of $K$-means.} The addition of $K$-means (KM) to the FedDyMem framework resulted in a consistent improvement across all evaluated metrics. When combined with the projection layer and memory generator, $K$-means further improves I-AUROC, achieving increases such as from 70.96\% to 71.96\% for Wafer and from 81.26\% to 83.34\% for PCB. Similarly, P-AUROC improves significantly with values such as Wafer from 77.01\% to 78.40\% and PCB from 94.97\% to 96.03\%.

\begin{figure}[t]
\centering
\includegraphics[width=0.9\linewidth]{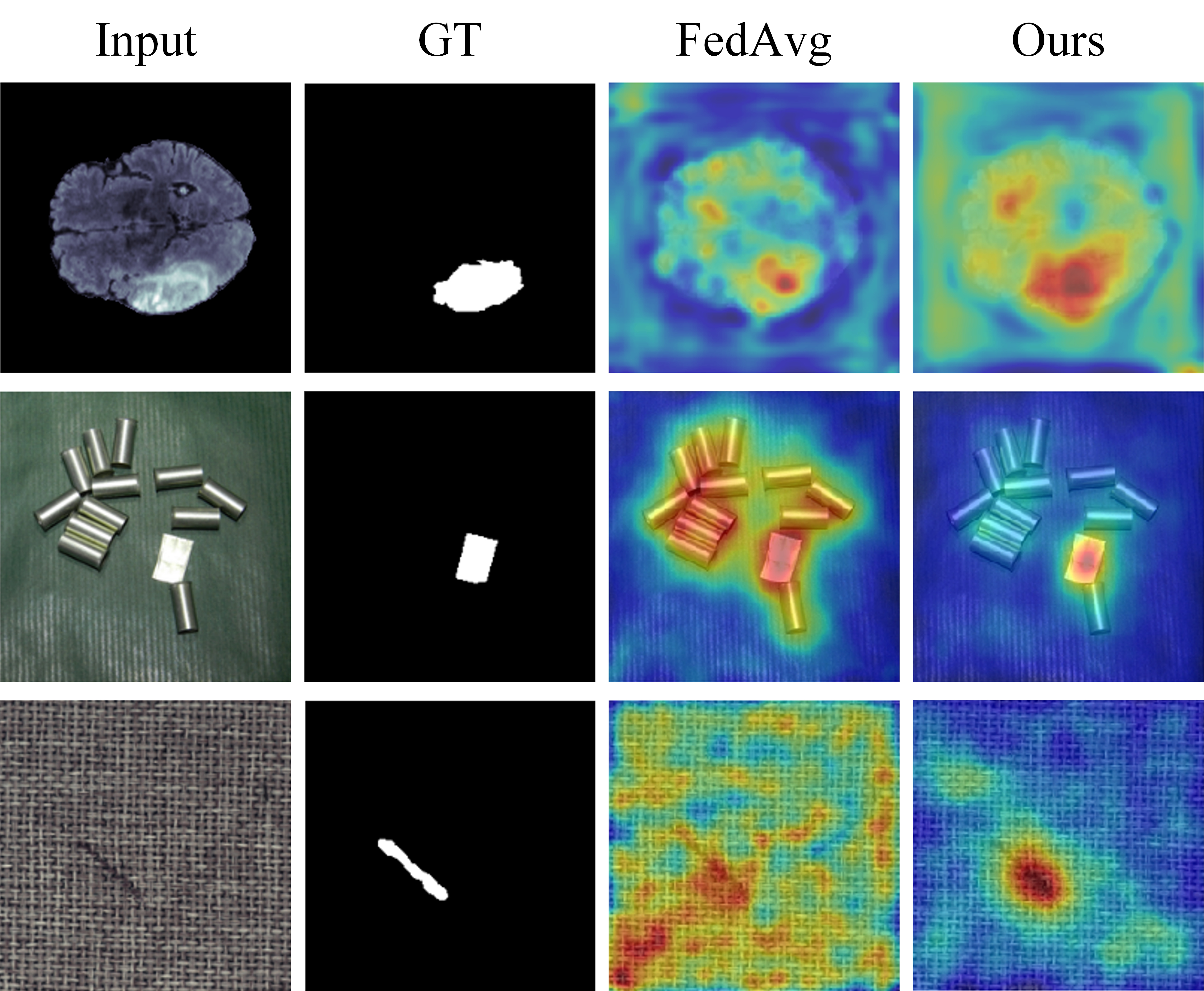} 
\caption{Qualitative comparison of FedAvg and our proposed method. The results were obtained from a randomly selected client using a global memory bank.}
\label{vis}
\end{figure}

\begin{figure*}[t]
\centering
\includegraphics[width=0.9\linewidth]{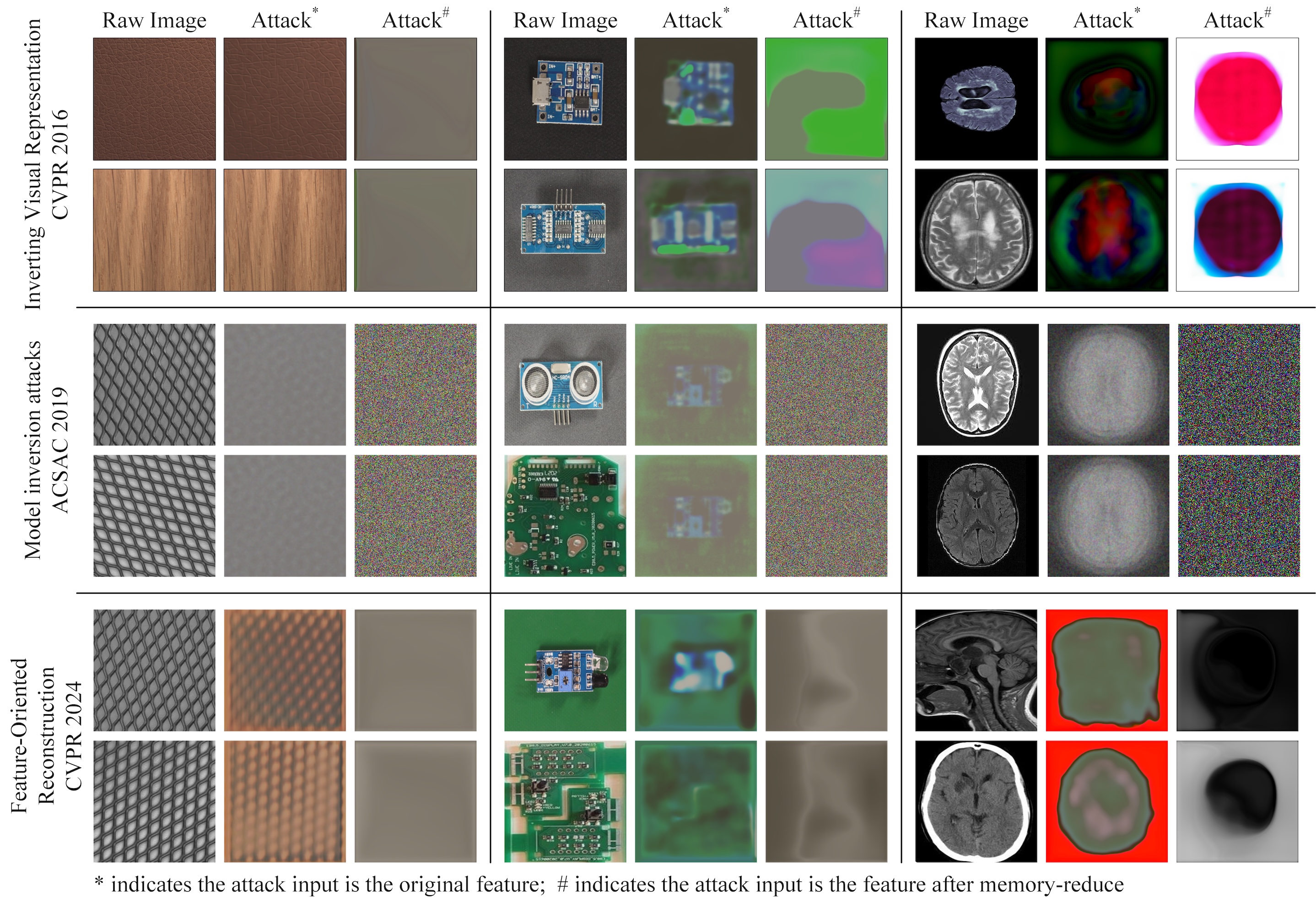} 
\caption{Qualitative reconstruction results before and after applying the proposed memory-reduce operation on MVTec, PCB, and Brain.}
\label{attack}
\end{figure*}

\subsection{Qualitative Results}
Fig.~\ref{vis} provides a qualitative comparison of the prediction results between FedAvg and the proposed FedDyMem method under an identical experimental setting. The FedDyMem approach exhibits a notable improvement in prediction quality, with substantially fewer noisy regions evident in the results. Following the methodology in (\cite{lee2022cfa}), the anomaly score heatmaps are upsampled to match the spatial resolution of the input image and subsequently refined using Gaussian filtering with $\sigma=4$, ensuring smoother and more coherent boundaries. Furthermore, min-max normalization is applied to standardize the range of anomaly scores, facilitating clearer visualization and interpretation of the results.

\begin{table}[t]
\centering
\setlength{\tabcolsep}{0.4mm}{
\footnotesize
\caption{Quantitative reconstruction attack results before and after memory-reduce.}
\label{tab:attack}
\begin{tabular}{c|c|c|ccc}
\toprule
Attack 
& \multirow{2}{*}{Dataset} 
& \multicolumn{1}{c|}{+Memory-} 
& \multirow{2}{*}{SSIM} 
& \multirow{2}{*}{PSNR} 
& \multirow{2}{*}{FID}  \\

Methods
& 
& Reduce 
& 
& 
&  \\
\midrule
\multirow{6}{*}{
    \makecell[c]{
        Deconvolution \\
        Inverting \\
        Method 
    }
} 
 & \multirow{2}{*}{MVTec} 
 & -           
 & 0.6350 
 & 25.9003 
 & 246.9916 \\
 &                         
 & \checkmark 
 & 0.2088 
 & 17.4787 
 & 415.9250 \\
\cmidrule{2-6}
 & \multirow{2}{*}{PCB}   
 & -           
 & 0.5213 
 & 25.8625 
 & 281.5855 \\
 &                         
 & \checkmark 
 & 0.3379 
 & 12.1827 
 & 358.7053 \\
\cmidrule{2-6}
 & \multirow{2}{*}{Brain} 
 & -           
 & 0.2273 
 & 11.0985
 & 300.5543 \\
 &                         
 & \checkmark 
 & 0.0872 
 &  2.8752 
 & 338.1136 \\
\midrule
\multirow{6}{*}{
    \makecell[c]{
        Collaborative \\
        inference \\
        attack
    }
}
 & \multirow{2}{*}{MVTec} 
 & -           
 & 0.0928 
 & 15.5441 
 & 300.9797 \\
 &                         
 & \checkmark 
 & 0.0179 
 & 11.4840 
 & 350.8446 \\
\cmidrule{2-6}
 & \multirow{2}{*}{PCB}   
 & -           
 & 0.4860 
 & 13.9388 
 & 313.1274 \\
 &                         
 & \checkmark 
 & 0.0178 
 & 10.7200 
 & 418.3834 \\
\cmidrule{2-6}
 & \multirow{2}{*}{Brain} 
 & -           
 & 0.1218 
 & 11.2046 
 & 365.9037 \\
 &                         
 & \checkmark 
 & 0.0038 
 &  6.5987 
 & 396.2980 \\
\midrule
\multirow{6}{*}{
    \makecell[c]{
        Feature \\
        Reconstruction \\
        Attack
    }
}
 & \multirow{2}{*}{MVTec} 
 & -           
 & 0.2094 
 & 16.2157 
 & 231.2310 \\
 &                         
 & \checkmark 
 & 0.0869 
 & 15.3695 
 & 326.0596 \\
\cmidrule{2-6}
 & \multirow{2}{*}{PCB}   
 & -           
 & 0.5391 
 & 17.4453 
 & 327.5368 \\
 &                         
 & \checkmark 
 & 0.2073 
 & 14.2317 
 & 370.4590 \\
\cmidrule{2-6}
 & \multirow{2}{*}{Brain} 
 & -           
 & 0.5629 
 & 20.2312 
 & 240.1137 \\
 &                         
 & \checkmark 
 & 0.1558 
 &  7.8485 
 & 352.5879 \\
\bottomrule
\end{tabular}
}
\end{table}

\subsection{Privacy Verification} 
To evaluate the privacy-preserving capability of FedDyMem, we conducted reconstruction attack experiments under representative adversarial settings. In these experiments, we considered the threat scenario in which an attacker attempts to recover original input images from the memory features exchanged between clients and the server. We applied three widely used attack methods, including the deconvolution-based inverting method (\cite{dosovitskiy2016inverting}), the collaborative inference attack method (\cite{he2019model}), and the stealthy feature-oriented reconstruction attack (\cite{xu2024stealthy}). We evaluated the quality of the reconstructed images using three established metrics. The Structural Similarity Index (SSIM) and Peak Signal-to-Noise Ratio (PSNR) assess the perceptual and signal-level fidelity between the reconstructed and original images. The Fréchet Inception Distance (FID) measures the semantic distance in a high-level feature space and reflects how closely the generated samples resemble raw data. A higher SSIM and PSNR indicate stronger reconstruction fidelity and thus greater privacy leakage, while a lower FID suggests a closer match in feature distribution to the original images. 

Table~\ref{tab:attack} presents the performance of each attack method across three datasets before and after applying the memory-reduce. Across all attack settings, memory-reduce consistently degrades both the reconstruction quality and the success rate of the adversaries. Specifically, we observe lower SSIM and PSNR values and higher FID scores after memory-reduce is applied, indicating reduced visual similarity and semantic fidelity between reconstructed and original images. Moreover, the ASR drops to 0\% in all but one case, where it is reduced from 28.57\% to 5.23\%, highlighting the strong effectiveness of memory-reduce in mitigating reconstruction attacks. In addition to the quantitative results, we present qualitative comparisons in Fig.~\ref{attack}, which visualizes representative reconstructed samples under different settings. The reconstructions before applying memory-reduce retain identifiable structure or texture information from the original images, especially in regions with distinct contours. In contrast, after memory-reduce is applied, the reconstructed images exhibit significant degradation in visual fidelity. 
The overall shape, content, and fine-grained patterns become vague or distorted, demonstrating the effectiveness of memory-reduce in suppressing reconstruction attacks.

\section{Conclusion}
In this article, we propose an efficient federated learning with dynamic memory and memory-reduce, called FedDyMem to address the federated UAD problem. To tackle the challenges posed by multi-client data distribution bias in real-world industrial and medical scenarios, FedDyMem introduces a memory generator and a loss function based on distance metrics to dynamically generate high-quality memory banks. To enhance both communication efficiency and privacy protection, a memory-reduce method is incorporated to compress memory features and reduce their mutual information with raw data. During the aggregation phase, K-means is employed on the server to mitigate ambiguity and confusion across memory banks from different clients. Extensive experimental results demonstrate that the proposed method significantly outperforms existing baselines. Future work will focus on optimizing the model heterogeneity adaptation to further enhance the scalability and generalization of the proposed framework.

\backmatter

\bibliography{ref}

@article{cai2024medianomaly,
  title={MedIAnomaly: A comparative study of anomaly detection in medical images},
  author={Cai, Yu and Zhang, Weiwen and Chen, Hao and Cheng, Kwang-Ting},
  journal={arXiv preprint arXiv:2404.04518},
  year={2024}
}

@article{xie2024iad,
  title={Im-iad: Industrial image anomaly detection benchmark in manufacturing},
  author={Xie, Guoyang and Wang, Jinbao and Liu, Jiaqi and Lyu, Jiayi and Liu, Yong and Wang, Chengjie and Zheng, Feng and Jin, Yaochu},
  journal={IEEE Transactions on Cybernetics},
  year={2024},
  publisher={IEEE}
}

@article{yang2024clustering,
  title={Clustering Federated Learning for Wafer Defects Classification on Statistical Heterogeneous Data},
  author={Yang, Guang and Yang, Zhijia and Cui, Shuping and Song, Chunhe and Wang, Jizhou and Wei, Haodong},
  journal={IEEE Transactions on Instrumentation and Measurement},
  year={2024},
  publisher={IEEE}
}

@inproceedings{wu2024facmic,
  title={FACMIC: Federated Adaptative CLIP Model for Medical Image Classification},
  author={Wu, Yihang and Desrosiers, Christian and Chaddad, Ahmad},
  booktitle={International Conference on Medical Image Computing and Computer-Assisted Intervention},
  pages={531--541},
  year={2024},
  organization={Springer}
}

@inproceedings{li2021fedrs,
  title={Fedrs: Federated learning with restricted softmax for label distribution non-iid data},
  author={Li, Xin-Chun and Zhan, De-Chuan},
  booktitle={Proceedings of the 27th ACM SIGKDD conference on knowledge discovery \& data mining},
  pages={995--1005},
  year={2021}
}

@article{saha2024multifaceted,
  title={A multifaceted survey on privacy preservation of federated learning: progress, challenges, and opportunities},
  author={Saha, Sanchita and Hota, Ashlesha and Chattopadhyay, Arup Kumar and Nag, Amitava and Nandi, Sukumar},
  journal={Artificial Intelligence Review},
  volume={57},
  number={7},
  pages={184},
  year={2024},
  publisher={Springer}
}

@article{zhao2024medical,
  title={Medical Federated Model with Mixture of Personalized and Shared Components},
  author={Zhao, Yawei and Liu, Qinghe and Liu, Pan and Liu, Xinwang and He, Kunlun},
  journal={IEEE Transactions on Pattern Analysis and Machine Intelligence},
  year={2024},
  publisher={IEEE}
}

@article{zhang2024vertical,
  title={Vertical Federated Learning Across Heterogeneous Regions for Industry 4.0},
  author={Zhang, Rui and Li, Hongwei and Tian, Luoding and Hao, Meng and Zhang, Yuan},
  journal={IEEE Transactions on Industrial Informatics},
  year={2024},
  publisher={IEEE}
}

@inproceedings{fedavg,
  title={Communication-efficient learning of deep networks from decentralized data},
  author={McMahan, Brendan and Moore, Eider and Ramage, Daniel and Hampson, Seth and y Arcas, Blaise Aguera},
  booktitle={Artificial intelligence and statistics},
  pages={1273--1282},
  year={2017},
  organization={PMLR}
}

@article{fedprox,
  title={Federated optimization in heterogeneous networks},
  author={Li, Tian and Sahu, Anit Kumar and Zaheer, Manzil and Sanjabi, Maziar and Talwalkar, Ameet and Smith, Virginia},
  journal={Proceedings of Machine learning and systems},
  volume={2},
  pages={429--450},
  year={2020}
}

@article{feddbe,
  title={Eliminating domain bias for federated learning in representation space},
  author={Zhang, Jianqing and Hua, Yang and Cao, Jian and Wang, Hao and Song, Tao and Xue, Zhengui and Ma, Ruhui and Guan, Haibing},
  journal={Advances in Neural Information Processing Systems},
  volume={36},
  year={2024}
}

@article{jin2023federated,
  title={Federated learning without full labels: A survey},
  author={Jin, Yilun and Liu, Yang and Chen, Kai and Yang, Qiang},
  journal={arXiv preprint arXiv:2303.14453},
  year={2023}
}

@inproceedings{fedU,
  author       = {Weiming Zhuang and
                  Xin Gan and
                  Yonggang Wen and
                  Shuai Zhang and
                  Shuai Yi},
  title        = {Collaborative Unsupervised Visual Representation Learning from Decentralized
                  Data},
  booktitle    = {2021 {IEEE/CVF} International Conference on Computer Vision},
  pages        = {4892--4901},
  publisher    = {{IEEE}},
  year         = {2021}
}

@inproceedings{liao2024rethinking,
  title={Rethinking the Representation in Federated Unsupervised Learning with Non-IID Data},
  author={Liao, Xinting and Liu, Weiming and Chen, Chaochao and Zhou, Pengyang and Yu, Fengyuan and Zhu, Huabin and Yao, Binhui and Wang, Tao and Zheng, Xiaolin and Tan, Yanchao},
  booktitle={Proceedings of the IEEE/CVF Conference on Computer Vision and Pattern Recognition},
  pages={22841--22850},
  year={2024}
}

@inproceedings{fedX,
  title={Fedx: Unsupervised federated learning with cross knowledge distillation},
  author={Han, Sungwon and Park, Sungwon and Wu, Fangzhao and Kim, Sundong and Wu, Chuhan and Xie, Xing and Cha, Meeyoung},
  booktitle={European Conference on Computer Vision},
  pages={691--707},
  year={2022},
  organization={Springer}
}

@article{zhang2024prototype,
  title={Prototype Similarity Distillation for Communication-Efficient Federated Unsupervised Representation Learning},
  author={Zhang, Chen and Xie, Yu and Chen, Tingbin and Mao, Wenjie and Yu, Bin},
  journal={IEEE Transactions on Knowledge and Data Engineering},
  year={2024},
  publisher={IEEE}
}

@inproceedings{lubana2022orchestra,
  title={Orchestra: Unsupervised Federated Learning via Globally Consistent Clustering},
  author={Lubana, Ekdeep Singh and Tang, Chi Ian and Kawsar, Fahim and Dick, Robert P and Mathur, Akhil},
  booktitle={Proc. International Conference on Machine Learning},
  year={2022}
}

@article{li2021survey,
  title={A survey on federated learning systems: Vision, hype and reality for data privacy and protection},
  author={Li, Qinbin and Wen, Zeyi and Wu, Zhaomin and Hu, Sixu and Wang, Naibo and Li, Yuan and Liu, Xu and He, Bingsheng},
  journal={IEEE Transactions on Knowledge and Data Engineering},
  volume={35},
  number={4},
  pages={3347--3366},
  year={2021},
  publisher={IEEE}
}

@inproceedings{zhang2024fedtgp,
  title={Fedtgp: Trainable global prototypes with adaptive-margin-enhanced contrastive learning for data and model heterogeneity in federated learning},
  author={Zhang, Jianqing and Liu, Yang and Hua, Yang and Cao, Jian},
  booktitle={Proceedings of the AAAI Conference on Artificial Intelligence},
  volume={38},
  number={15},
  pages={16768--16776},
  year={2024}
}

@inproceedings{kim2023protofl,
  title={Protofl: Unsupervised federated learning via prototypical distillation},
  author={Kim, Hansol and Kwak, Youngjun and Jung, Minyoung and Shin, Jinho and Kim, Youngsung and Kim, Changick},
  booktitle={Proceedings of the IEEE/CVF International Conference on Computer Vision},
  pages={6470--6479},
  year={2023}
}

@article{li2021fedbn,
  title={Fedbn: Federated learning on non-iid features via local batch normalization},
  author={Li, Xiaoxiao and Jiang, Meirui and Zhang, Xiaofei and Kamp, Michael and Dou, Qi},
  journal={arXiv preprint arXiv:2102.07623},
  year={2021}
}

@article{zhou2024federated,
  title={Federated Feature Augmentation and Alignment},
  author={Zhou, Tianfei and Yuan, Ye and Wang, Binglu and Konukoglu, Ender},
  journal={IEEE Transactions on Pattern Analysis and Machine Intelligence},
  year={2024},
  publisher={IEEE}
}

@article{almanifi2023communication,
  title={Communication and computation efficiency in federated learning: A survey},
  author={Almanifi, Omair Rashed Abdulwareth and Chow, Chee-Onn and Tham, Mau-Luen and Chuah, Joon Huang and Kanesan, Jeevan},
  journal={Internet of Things},
  volume={22},
  pages={100742},
  year={2023},
  publisher={Elsevier}
}

@article{li2020federated,
  title={Federated learning: Challenges, methods, and future directions},
  author={Li, Tian and Sahu, Anit Kumar and Talwalkar, Ameet and Smith, Virginia},
  journal={IEEE signal processing magazine},
  volume={37},
  number={3},
  pages={50--60},
  year={2020},
  publisher={IEEE}
}

@article{shlezinger2020uveqfed,
  title={UVeQFed: Universal vector quantization for federated learning},
  author={Shlezinger, Nir and Chen, Mingzhe and Eldar, Yonina C and Poor, H Vincent and Cui, Shuguang},
  journal={IEEE Transactions on Signal Processing},
  volume={69},
  pages={500--514},
  year={2020},
  publisher={IEEE}
}

@article{li2023online,
  title={Online Distributed Stochastic Gradient Algorithm for Non-Convex Optimization With Compressed Communication},
  author={Li, Jueyou and Li, Chaojie and Fan, Jing and Huang, Tingwen},
  journal={IEEE Transactions on Automatic Control},
  year={2023},
  publisher={IEEE}
}

@article{xu2020client,
  title={Client selection and bandwidth allocation in wireless federated learning networks: A long-term perspective},
  author={Xu, Jie and Wang, Heqiang},
  journal={IEEE Transactions on Wireless Communications},
  volume={20},
  number={2},
  pages={1188--1200},
  year={2020},
  publisher={IEEE}
}

@article{yang2022federated,
  title={Federated learning with nesterov accelerated gradient},
  author={Yang, Zhengjie and Bao, Wei and Yuan, Dong and Tran, Nguyen H and Zomaya, Albert Y},
  journal={IEEE Transactions on Parallel and Distributed Systems},
  volume={33},
  number={12},
  pages={4863--4873},
  year={2022},
  publisher={IEEE}
}

@article{li2020federated2,
  title={Federated optimization in heterogeneous networks},
  author={Li, Tian and Sahu, Anit Kumar and Zaheer, Manzil and Sanjabi, Maziar and Talwalkar, Ameet and Smith, Virginia},
  journal={Proceedings of Machine learning and systems},
  volume={2},
  pages={429--450},
  year={2020}
}

@article{liang2023omni,
  title={Omni-frequency channel-selection representations for unsupervised anomaly detection},
  author={Liang, Yufei and Zhang, Jiangning and Zhao, Shiwei and Wu, Runze and Liu, Yong and Pan, Shuwen},
  journal={IEEE Transactions on Image Processing},
  year={2023},
  publisher={IEEE}
}

@inproceedings{dai2024generating,
  title={Generating and reweighting dense contrastive patterns for unsupervised anomaly detection},
  author={Dai, Songmin and Wu, Yifan and Li, Xiaoqiang and Xue, Xiangyang},
  booktitle={Proceedings of the AAAI Conference on Artificial Intelligence},
  volume={38},
  number={2},
  pages={1454--1462},
  year={2024}
}

@inproceedings{dai2023tackling,
  title={Tackling data heterogeneity in federated learning with class prototypes},
  author={Dai, Yutong and Chen, Zeyuan and Li, Junnan and Heinecke, Shelby and Sun, Lichao and Xu, Ran},
  booktitle={Proceedings of the AAAI Conference on Artificial Intelligence},
  volume={37},
  number={6},
  pages={7314--7322},
  year={2023}
}

@inproceedings{karimireddy2020scaffold,
  title={Scaffold: Stochastic controlled averaging for federated learning},
  author={Karimireddy, Sai Praneeth and Kale, Satyen and Mohri, Mehryar and Reddi, Sashank and Stich, Sebastian and Suresh, Ananda Theertha},
  booktitle={International conference on machine learning},
  pages={5132--5143},
  year={2020},
  organization={PMLR}
}

@article{li2024fedcir,
  title={Fedcir: Client-invariant representation learning for federated non-iid features},
  author={Li, Zijian and Lin, Zehong and Shao, Jiawei and Mao, Yuyi and Zhang, Jun},
  journal={IEEE Transactions on Mobile Computing},
  year={2024},
  publisher={IEEE}
}

@article{zhao2018federated,
  title={Federated learning with non-iid data},
  author={Zhao, Yue and Li, Meng and Lai, Liangzhen and Suda, Naveen and Civin, Damon and Chandra, Vikas},
  journal={arXiv preprint arXiv:1806.00582},
  year={2018}
}

@inproceedings{lei2025adapted,
  title={Adapted-moe: Mixture of experts with test-time adaption for anomaly detection},
  author={Lei, Tianwu and Chen, Silin and Wang, Bohan and Jiang, Zhengkai and Zou, Ningmu},
  booktitle={International Conference on Intelligent Computing},
  pages={427--441},
  year={2025},
  organization={Springer}
}

@inproceedings{luo2022adapt,
  title={Adapt to adaptation: Learning personalization for cross-silo federated learning},
  author={Luo, Jun and Wu, Shandong},
  booktitle={IJCAI: proceedings of the conference},
  volume={2022},
  pages={2166},
  year={2022},
  organization={NIH Public Access}
}

@inproceedings{huang2023rethinking,
  title={Rethinking federated learning with domain shift: A prototype view},
  author={Huang, Wenke and Ye, Mang and Shi, Zekun and Li, He and Du, Bo},
  booktitle={2023 IEEE/CVF Conference on Computer Vision and Pattern Recognition (CVPR)},
  pages={16312--16322},
  year={2023},
  organization={IEEE}
}

@inproceedings{tan2022fedproto,
  title={Fedproto: Federated prototype learning across heterogeneous clients},
  author={Tan, Yue and Long, Guodong and Liu, Lu and Zhou, Tianyi and Lu, Qinghua and Jiang, Jing and Zhang, Chengqi},
  booktitle={Proceedings of the AAAI Conference on Artificial Intelligence},
  volume={36},
  number={8},
  pages={8432--8440},
  year={2022}
}

@article{li2024feature,
  title={Feature matching data synthesis for non-iid federated learning},
  author={Li, Zijian and Sun, Yuchang and Shao, Jiawei and Mao, Yuyi and Wang, Jessie Hui and Zhang, Jun},
  journal={IEEE Transactions on Mobile Computing},
  year={2024},
  publisher={IEEE}
}

@article{nguyen2021federated,
  title={Federated learning for COVID-19 detection with generative adversarial networks in edge cloud computing},
  author={Nguyen, Dinh C and Ding, Ming and Pathirana, Pubudu N and Seneviratne, Aruna and Zomaya, Albert Y},
  journal={IEEE Internet of Things Journal},
  volume={9},
  number={12},
  pages={10257--10271},
  year={2021},
  publisher={IEEE}
}

@article{you2022unified,
  title={A unified model for multi-class anomaly detection},
  author={You, Zhiyuan and Cui, Lei and Shen, Yujun and Yang, Kai and Lu, Xin and Zheng, Yu and Le, Xinyi},
  journal={Advances in Neural Information Processing Systems},
  volume={35},
  pages={4571--4584},
  year={2022}
}

@inproceedings{zavrtanik2021draem,
  title={Draem-a discriminatively trained reconstruction embedding for surface anomaly detection},
  author={Zavrtanik, Vitjan and Kristan, Matej and Sko{\v{c}}aj, Danijel},
  booktitle={Proceedings of the IEEE/CVF international conference on computer vision},
  pages={8330--8339},
  year={2021}
}

@article{bergmann2018improving,
  title={Improving unsupervised defect segmentation by applying structural similarity to autoencoders},
  author={Bergmann, Paul and L{\"o}we, Sindy and Fauser, Michael and Sattlegger, David and Steger, Carsten},
  journal={arXiv preprint arXiv:1807.02011},
  year={2018}
}

@article{jiang2022masked,
  title={Masked swin transformer unet for industrial anomaly detection},
  author={Jiang, Jielin and Zhu, Jiale and Bilal, Muhammad and Cui, Yan and Kumar, Neeraj and Dou, Ruihan and Su, Feng and Xu, Xiaolong},
  journal={IEEE Transactions on Industrial Informatics},
  volume={19},
  number={2},
  pages={2200--2209},
  year={2022},
  publisher={IEEE}
}

@article{cohen2020sub,
  title={Sub-image anomaly detection with deep pyramid correspondences},
  author={Cohen, Niv and Hoshen, Yedid},
  journal={arXiv preprint arXiv:2005.02357},
  year={2020}
}

@inproceedings{defard2021padim,
  title={Padim: a patch distribution modeling framework for anomaly detection and localization},
  author={Defard, Thomas and Setkov, Aleksandr and Loesch, Angelique and Audigier, Romaric},
  booktitle={International Conference on Pattern Recognition},
  pages={475--489},
  year={2021},
  organization={Springer}
}

@inproceedings{roth2022towards,
  title={Towards total recall in industrial anomaly detection},
  author={Roth, Karsten and Pemula, Latha and Zepeda, Joaquin and Sch{\"o}lkopf, Bernhard and Brox, Thomas and Gehler, Peter},
  booktitle={Proceedings of the IEEE/CVF conference on computer vision and pattern recognition},
  pages={14318--14328},
  year={2022}
}

@inproceedings{bae2023pni,
  title={PNI: Industrial anomaly detection using position and neighborhood information},
  author={Bae, Jaehyeok and Lee, Jae-Han and Kim, Seyun},
  booktitle={Proceedings of the IEEE/CVF International Conference on Computer Vision},
  pages={6373--6383},
  year={2023}
}

@inproceedings{li2021fedphp,
  title={Fedphp: Federated personalization with inherited private models},
  author={Li, Xin-Chun and Zhan, De-Chuan and Shao, Yunfeng and Li, Bingshuai and Song, Shaoming},
  booktitle={Joint European Conference on Machine Learning and Knowledge Discovery in Databases},
  pages={587--602},
  year={2021},
  organization={Springer}
}

@inproceedings{he2016deep,
  title={Deep residual learning for image recognition},
  author={He, Kaiming and Zhang, Xiangyu and Ren, Shaoqing and Sun, Jian},
  booktitle={Proceedings of the IEEE conference on computer vision and pattern recognition},
  pages={770--778},
  year={2016}
}

@inproceedings{lee2025continuous,
  title={Continuous memory representation for anomaly detection},
  author={Lee, Joo Chan and Kim, Taejune and Park, Eunbyung and Woo, Simon S and Ko, Jong Hwan},
  booktitle={European Conference on Computer Vision},
  pages={438--454},
  year={2025},
  organization={Springer}
}

@article{liu2018intriguing,
  title={An intriguing failing of convolutional neural networks and the coordconv solution},
  author={Liu, Rosanne and Lehman, Joel and Molino, Piero and Petroski Such, Felipe and Frank, Eric and Sergeev, Alex and Yosinski, Jason},
  journal={Advances in neural information processing systems},
  volume={31},
  year={2018}
}

@article{lee2022cfa,
  title={Cfa: Coupled-hypersphere-based feature adaptation for target-oriented anomaly localization},
  author={Lee, Sungwook and Lee, Seunghyun and Song, Byung Cheol},
  journal={IEEE Access},
  volume={10},
  pages={78446--78454},
  year={2022},
  publisher={IEEE}
}

@article{tan2022federated,
  title={Federated learning from pre-trained models: A contrastive learning approach},
  author={Tan, Yue and Long, Guodong and Ma, Jie and Liu, Lu and Zhou, Tianyi and Jiang, Jing},
  journal={Advances in neural information processing systems},
  volume={35},
  pages={19332--19344},
  year={2022}
}

@inproceedings{li2023prototype,
  title={Prototype-guided knowledge transfer for federated unsupervised cross-modal hashing},
  author={Li, Jingzhi and Li, Fengling and Zhu, Lei and Cui, Hui and Li, Jingjing},
  booktitle={Proceedings of the 31st ACM International Conference on Multimedia},
  pages={1013--1022},
  year={2023}
}

@article{baid2021rsna,
  title={The rsna-asnr-miccai brats 2021 benchmark on brain tumor segmentation and radiogenomic classification},
  author={Baid, Ujjwal and Ghodasara, Satyam and Mohan, Suyash and Bilello, Michel and Calabrese, Evan and Colak, Errol and Farahani, Keyvan and Kalpathy-Cramer, Jayashree and Kitamura, Felipe C and Pati, Sarthak and others},
  journal={arXiv preprint arXiv:2107.02314},
  year={2021}
}

@inproceedings{salehi2021multiresolution,
  title={Multiresolution knowledge distillation for anomaly detection},
  author={Salehi, Mohammadreza and Sadjadi, Niousha and Baselizadeh, Soroosh and Rohban, Mohammad H and Rabiee, Hamid R},
  booktitle={Proceedings of the IEEE/CVF conference on computer vision and pattern recognition},
  pages={14902--14912},
  year={2021}
}

@article{kermany2018identifying,
  title={Identifying medical diagnoses and treatable diseases by image-based deep learning},
  author={Kermany, Daniel S and Goldbaum, Michael and Cai, Wenjia and Valentim, Carolina CS and Liang, Huiying and Baxter, Sally L and McKeown, Alex and Yang, Ge and Wu, Xiaokang and Yan, Fangbing and others},
  journal={cell},
  volume={172},
  number={5},
  pages={1122--1131},
  year={2018},
  publisher={Elsevier}
}

@inproceedings{wang2017chestx,
  title={Chestx-ray8: Hospital-scale chest x-ray database and benchmarks on weakly-supervised classification and localization of common thorax diseases},
  author={Wang, Xiaosong and Peng, Yifan and Lu, Le and Lu, Zhiyong and Bagheri, Mohammadhadi and Summers, Ronald M},
  booktitle={Proceedings of the IEEE conference on computer vision and pattern recognition},
  pages={2097--2106},
  year={2017}
}

@article{nguyen2022vindr,
  title={VinDr-CXR: An open dataset of chest X-rays with radiologist’s annotations},
  author={Nguyen, Ha Q and Lam, Khanh and Le, Linh T and Pham, Hieu H and Tran, Dat Q and Nguyen, Dung B and Le, Dung D and Pham, Chi M and Tong, Hang TT and Dinh, Diep H and others},
  journal={Scientific Data},
  volume={9},
  number={1},
  pages={429},
  year={2022},
  publisher={Nature Publishing Group UK London}
}

@inproceedings{bergmann2019mvtec,
  title={MVTec AD--A comprehensive real-world dataset for unsupervised anomaly detection},
  author={Bergmann, Paul and Fauser, Michael and Sattlegger, David and Steger, Carsten},
  booktitle={Proceedings of the IEEE/CVF conference on computer vision and pattern recognition},
  pages={9592--9600},
  year={2019}
}

@inproceedings{jezek2021deep,
  title={Deep learning-based defect detection of metal parts: evaluating current methods in complex conditions},
  author={Jezek, Stepan and Jonak, Martin and Burget, Radim and Dvorak, Pavel and Skotak, Milos},
  booktitle={2021 13th International congress on ultra modern telecommunications and control systems and workshops (ICUMT)},
  pages={66--71},
  year={2021},
  organization={IEEE}
}

@inproceedings{zou2022spot,
  title={Spot-the-difference self-supervised pre-training for anomaly detection and segmentation},
  author={Zou, Yang and Jeong, Jongheon and Pemula, Latha and Zhang, Dongqing and Dabeer, Onkar},
  booktitle={European Conference on Computer Vision},
  pages={392--408},
  year={2022},
  organization={Springer}
}

@article{bai2023vision,
  title={Vision datasets: A benchmark for vision-based industrial inspection},
  author={Bai, Haoping and Mou, Shancong and Likhomanenko, Tatiana and Cinbis, Ramazan Gokberk and Tuzel, Oncel and Huang, Ping and Shan, Jiulong and Shi, Jianjun and Cao, Meng},
  journal={arXiv preprint arXiv:2306.07890},
  year={2023}
}

@inproceedings{li2022federated,
  title={Federated learning with position-aware neurons},
  author={Li, Xin-Chun and Xu, Yi-Chu and Song, Shaoming and Li, Bingshuai and Li, Yinchuan and Shao, Yunfeng and Zhan, De-Chuan},
  booktitle={Proceedings of the IEEE/CVF Conference on Computer Vision and Pattern Recognition},
  pages={10082--10091},
  year={2022}
}

@inproceedings{dosovitskiy2016inverting,
  title={Inverting visual representations with convolutional networks},
  author={Dosovitskiy, Alexey and Brox, Thomas},
  booktitle={Proceedings of the IEEE conference on computer vision and pattern recognition},
  pages={4829--4837},
  year={2016}
}

@inproceedings{he2019model,
  title={Model inversion attacks against collaborative inference},
  author={He, Zecheng and Zhang, Tianwei and Lee, Ruby B},
  booktitle={Proceedings of the 35th annual computer security applications conference},
  pages={148--162},
  year={2019}
}

@inproceedings{xu2024stealthy,
  title={A stealthy wrongdoer: Feature-oriented reconstruction attack against split learning},
  author={Xu, Xiaoyang and Yang, Mengda and Yi, Wenzhe and Li, Ziang and Wang, Juan and Hu, Hongxin and Zhuang, Yong and Liu, Yaxin},
  booktitle={Proceedings of the IEEE/CVF conference on computer vision and pattern recognition},
  pages={12130--12139},
  year={2024}
}

@article{maaten2008visualizing,
  title={Visualizing data using t-SNE},
  author={Maaten, Laurens van der and Hinton, Geoffrey},
  journal={Journal of machine learning research},
  volume={9},
  number={Nov},
  pages={2579--2605},
  year={2008}
}

@article{cao2025hybrid,
  title={A hybrid and efficient Federated Learning for privacy preservation in IoT devices},
  author={Cao, Shaohua and Liu, Shangru and Yang, Yansheng and Du, Wenjie and Zhan, Zijun and Wang, Danxin and Zhang, Weishan},
  journal={Ad Hoc Networks},
  volume={170},
  pages={103761},
  year={2025},
  publisher={Elsevier}
}

@inproceedings{ji2025re,
  title={Re-evaluating privacy in centralized and decentralized learning: An information-theoretical and empirical study},
  author={Ji, Changlong and Heusdens, Richard and Maag, Stephane and Li, Qiongxiu},
  booktitle={ICASSP 2025-2025 IEEE International Conference on Acoustics, Speech and Signal Processing (ICASSP)},
  pages={1--5},
  year={2025},
  organization={IEEE}
}

@inproceedings{tan2024defending,
  title={Defending against data reconstruction attacks in federated learning: An information theory approach},
  author={Tan, Qi and Li, Qi and Zhao, Yi and Liu, Zhuotao and Guo, Xiaobing and Xu, Ke},
  booktitle={33rd USENIX Security Symposium (USENIX Security 24)},
  pages={325--342},
  year={2024}
}

@inproceedings{alsulaimawi2024federated,
  title={Federated learning with privacy-preserving active learning: A min-max mutual information approach},
  author={Alsulaimawi, Zahir},
  booktitle={2024 International Conference on Artificial Intelligence in Information and Communication (ICAIIC)},
  pages={503--508},
  year={2024},
  organization={IEEE}
}

@inproceedings{zhang2023fedcr,
  title={FedCR: Personalized federated learning based on across-client common representation with conditional mutual information regularization},
  author={Zhang, Hao and Li, Chenglin and Dai, Wenrui and Zou, Junni and Xiong, Hongkai},
  booktitle={International Conference on Machine Learning},
  pages={41314--41330},
  year={2023},
  organization={PMLR}
}

@article{yang2025relation,
  title={Relation-Guided Versatile Regularization for Federated Semi-Supervised Learning},
  author={Yang, Qiushi and Chen, Zhen and Peng, Zhe and Yuan, Yixuan},
  journal={International Journal of Computer Vision},
  pages={1--15},
  year={2025},
  publisher={Springer}
}

\end{document}